\theoremstyle{plain}
\newtheorem{thm}{Theorem}
\newtheorem{lem}[thm]{Lemma}
\newtheorem{claim}[thm]{Claim}
\theoremstyle{definition}
\newtheorem{defn}[thm]{Definition}
\newcommand{\eq}[1]{(\hyperref[eq:#1]{\ref*{eq:#1}})}
\renewcommand{\sec}[1]{\hyperref[sec:#1]{Section~\ref*{sec:#1}}}
\newcommand{\thrm}[1]{\hyperref[thm:#1]{Theorem~\ref*{thm:#1}}}
\newcommand{\lemm}[1]{\hyperref[lemm:#1]{Lemma~\ref*{lemm:#1}}}
\newcommand{\prop}[1]{\hyperref[prop:#1]{Proposition~\ref*{prop:#1}}}
\newcommand{\corr}[1]{\hyperref[corr:#1]{Corollary~\ref*{corr:#1}}}
\newcommand{\fig}[1]{\hyperref[fig:#1]{Figure~\ref*{fig:#1}}}
\newcommand{\ket}[1]{|#1\rangle}
\newcommand{\bra}[1]{\langle#1|}
\DeclareMathAlphabet{\matheu}{U}{eus}{m}{n}
\newcolumntype{L}[1]{>{\raggedright}p{#1}}
\newcolumntype{C}[1]{>{\centering}p{#1}}
\newcolumntype{R}[1]{>{\raggedleft}p{#1}}
\newcolumntype{D}{>{\centering\arraybackslash}X}
\begin{document}
\title{Universal fault-tolerant quantum computation with Bacon-Shor codes}
\author{Theodore J. Yoder}
\affiliation{Department of Physics, Massachusetts Institute of Technology}
\begin{abstract}
We present a fault-tolerant universal gate set consisting of Hadamard and controlled-controlled-Z (CCZ) on Bacon-Shor subsystem codes. Transversal non-Clifford gates on these codes are intriguing in that higher levels of the Clifford hierarchy become accessible as the code becomes more asymmetric. For instance, in an appropriate gauge, Bacon-Shor codes on an $m\times m^k$ lattice have transversal $k$-qubit-controlled $Z$. Through a variety of tricks, including intermediate error-correction and non-Pauli recovery, we reduce the overhead required for fault-tolerant CCZ. We calculate pseudothresholds for our universal gate set on the smallest $3\times3$ Bacon-Shor code and also compare our gates with magic-states within the framework of a proposed ion trap architecture.
\end{abstract}

\maketitle
\parskip 0pt
Shor's 9-qubit code was the first quantum code discovered \cite{Shor1995} and is still popular due to its conceptual simplicity. Later, it was realized that viewing Shor's code as a subsystem code (the so-called Bacon-Shor code) leads to even easier protocols for error-correction \cite{Bacon2006,Aliferis2007} with just local interactions on an $m\times n$ lattice of qubits. An important concept in quantum coding theory is a threshold, the value of the physical error rate below which encoding quantum data begins to help reduce errors. Suprisingly, the simple Bacon-Shor code even boasts some of the highest known thresholds for concatenated codes \cite{Aliferis2007}.

Behind such thresholds, however, actually lie several assumptions on how a universal gate set is constructed \cite{Aliferis2007b}. Since the quoted thresholds are calculated by simulating the encoded CNOT gadget, but are for \emph{universal} computation, the assumption is made that for non-Clifford gates, a magic-state \cite{Bravyi2005} should be distilled and injected at an arbitrarily high level of concatenation. The resulting scheme is simply not resource realistic for fault-tolerant experiments operating in the low-distance limit.

It is more realistic to study a universal gate set for low-distance codes directly, and it is this program that we adhere to here. This is possible in principle, as already evidenced dating back to Shor \cite{Shor1996} and similar constructions elsewhere \cite{Aliferis2006} that verify magic-states directly at low distance. Optimizing low-distance constructions of non-Clifford gates is the next step. Prior work in this direction is promising. For instance, gauge-fixing can convert between the 7-qubit and 15-qubit codes \cite{Paetznick2013,Anderson2014} to take advantage of their complementary universal transversal gate sets. Alternatively, concatenation of complementary codes also yields a universal set of gates \cite{Jochym2014}.

Our starting point is a different strategy, wherein instead of combining two different codes, non-transversal gates are constructed directly and made fault-tolerant via stabilizer measurements intermediate in the circuit. Since the intermediate error-correction cycles effectively break the circuit into fault-tolerant pieces, this is dubbed ``pieceable'' fault-tolerance \cite{Yoder2016}. One advantage of this approach is relatively broad applicability, including to the Bacon-Shor code family. Code-specific simplifications of the general-purpose designs in \cite{Yoder2016} exist and these can offer substantial improvements in how much intermediate error-correction is required. The Bacon-Shor code's simple structure also lends itself well to these strategies.

Here we develop an appealing scheme for universal fault-tolerant computing on Bacon-Shor codes using fault-tolerant Hadamard ($H$) and controlled-controlled-Z (CCZ) gates. In its simplest form, the scheme hinges on the observation that extending a symmetric Bacon-Shor code into an asymmetric one enlarges the class of transversal non-Clifford gates on the code. Indeed, make the code asymmetric enough, i.e.~$m\times m^k$, and a gate (namely the $k$-qubit-controlled Z) from the $k^{\text{th}}$-level of the Clifford hierarchy \cite{Gottesman1999} becomes transversal. If the code does not meet the asymmetry requirement, then adding intermediate error-correction can still make the construction fault-tolerant. Since many uses of intermediate error-correction are undesirable, we can reduce the number by clever circuit design. Additional simplifications are achieved using non-Pauli recovery operations. Non-Pauli recovery lies outside the standard formalism for fault-tolerant computing with stabilizer codes but is permitted by the Knill-Laflamme conditions \cite{Knill1997}. Such overhead-reducing innovations may be crucial for experimentally realizing small instances of fault-tolerance.

The smallest instance of our constructions is a fault-tolerant CCZ on the $3\times3$ Bacon-Shor code that uses no intermediate error-correction. Because $H$ is transversal on symmetric Bacon-Shor codes, the $3\times3$ code has a particularly simple fault-tolerant universal gate set that moreover requires no postselection.

For the $3\times3$ code, we compute exREC pseudothresholds \cite{Svore2006} for a universal gate set under circuit depolarizing noise and find $\sim8\times10^{-5}$ for the largest gate CCZ. We also discuss overhead, and compare with magic-state implementations of CCZ in the ion trap MUSICQ \cite{Monroe2014} architecture. In this context, we estimate a roughly 4 times faster implementation of fault-tolerant Toffoli, a key part of the quantum circuits used for Shor's factoring \cite{Shor1999}. We conclude that ion traps, with their ability to easily implement non-local gates, are in a promising position to take full advantage of fault-tolerant optimizations.

The Bacon-Shor codes are subsystem codes, as opposed to subspace codes. As such, there are three relevant sets of Paulis, the stabilizers, gauge operators, and logical operators. Let $[m)=\{0,1,\dots,m-1\}$ and $(m)=\{1,2,\dots,m-1\}$ denote subsets of integers. Lay out the code qubits in an $m\times n$ lattice (Fig.~\ref{Bacon_shor_grid}), so qubits are indicated by coordinates $(i,j)\in[m)\times[n)$ and single-qubit Paulis by subscripts $X_{i,j},Y_{i,j}$, and $Z_{i,j}$. As in \cite{Aliferis2007}, denote columns and rows of Paulis by, respectively,
\begin{align}
Z_{*,j}&=Z_{0,j}Z_{1,j}\dots Z_{m-1,j},\\
X_{i,*}&=X_{i,0}X_{i,1}\dots X_{i,n-1}.
\end{align}

\begin{figure}
\includegraphics[width=0.8\columnwidth]{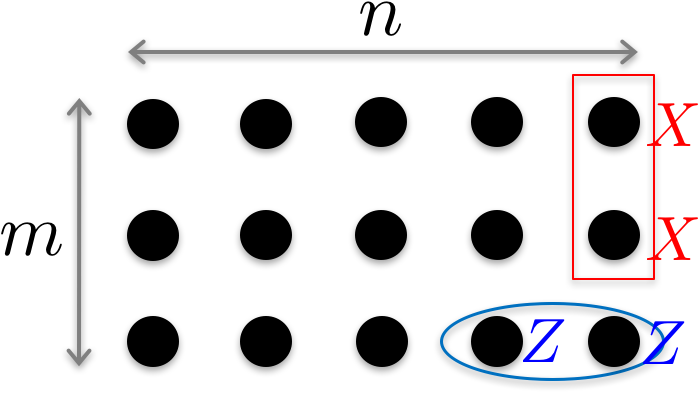}
\caption{\label{Bacon_shor_grid} The $m\times n$ Bacon-Shor code on a lattice with $Z$-type and $X$-type gauge operators labeled. All gauge operators are translates of these, and all stabilizers are products of them.}
\end{figure}

The stabilizer group $S$ is generated from
\begin{align}
\tilde Z_j&=Z_{*,j-1}Z_{*,j},\quad\forall j\in(n),\\
\tilde X_i&=X_{i-1,*}X_{i,*},\quad\forall i\in(m).
\end{align}
That is, the stabilizers are an even number of columns of $Z$s or an even number of rows of $X$s.

Gauge operators are generated by
\begin{align}
\bar Z_{i,j}&=Z_{i,j-1}Z_{i,j},\quad\forall i\in[m),j\in(n),\\
\bar X_{h,k}&=X_{h-1,k}X_{h,k},\quad\forall h\in(m),k\in[n),
\end{align}
while (lowest-weight) bare logical operators are any single column of $Z$s and any single row of $X$s. For instance, $\bar Z=Z_{*,0}$ and $\bar X=X_{0,*}$. The code distance of an $m\times n$ Bacon-Shor code is $\min(m,n)$, but asymmetric codes also offer greater protection against one type of error \cite{Brooks2013}. A (destructive) transversal measurement of all qubits in the $Z$- ($X$-) basis suffices to measure $\bar X$ and $\bar Z$.

The gauge offers degrees of freedom not available in subspace stabilizer codes. Indeed, notice that from measurements of the gauge operators we can infer a measurement of the stabilizers, since
\begin{align}
\tilde Z_j = \prod_{i\in[m)} \bar Z_{i,j},\quad \tilde X_i = \prod_{j\in[n)} \bar X_{i,j}.
\end{align}
Measuring local gauge operators can be easier than measuring non-local, high-weight stabilizers directly.

Although the gauge offers an advantage during error-correction, during our logical gates we will want to fix a gauge. On paper, fixing a gauge amounts to adding a maximal, commuting subset of the gauge operators to the stabilizer. Common gauges are the $Z$-gauge (formed by placing all $\bar Z_{i,j}$ into $S$) and the $X$-gauge (placing all $\bar X_{h,k}$ into $S$). However, the rotated surface code \cite{Tomita2014} is also just a gauge choice of the Bacon-Shor code.

In practice, we can fix a gauge by measuring the gauge operators. Steane error-correction \cite{Steane1997} achieves a high threshold while also being extremely simple for Bacon-Shor codes. Indeed, the logical states $\ket{\bar 0_X}$ and $\ket{\bar +_Z}$, where subscripts denote $X$- or $Z$-gauge, are simply tensor products of CAT states, $\ket{\bar0_X}=\left(|+\rangle^{\otimes m}+|-\rangle^{\otimes m}\right)^{\otimes n}$ and $\ket{\bar+_Z}=\left(|0\rangle^{\otimes n}+|1\rangle^{\otimes n}\right)^{\otimes m}$.
Preparing these fault-tolerantly is easy for $m=n=3$, because 3-qubit CAT states need not be postselectively verified. Gauge operators $\bar X_{h,k}$ (and therefore stabilizers $\tilde X_i$ as well) are measured using $\ket{\bar0_X}$, while $\bar Z_{i,j}$ (and so $\tilde Z_j$ as well) are measured using $\ket{\bar+_Z}$. Measuring the $X$-gauge followed by the $Z$-gauge leaves the code in the $Z$-gauge, and reversing the order of measurement leaves it in the $X$-gauge.

There is a subtlety in the process of gauge fixing. When the syndrome measurement is ordered such that the gauge changes (from $X$- to $Z$- or vice versa) we obtain more information about the errors than if we had ordered the measurements such that the gauge does not change. In the former case, call it type-1 correction, if the code began in the $Z$-gauge ($X$-gauge), we learn the values of all $\bar Z_{i,j}$ ($\bar X_{h,k}$) \emph{and} the values of all $\tilde X_i$ ($\tilde Z_j$). However, in the latter case, type-2 correction, we learn only the values of all $\tilde X_i$ and all $\tilde Z_j$. Transversal logical gates can use either type of correction to achieve fault-tolerance, but the fault-tolerance of our non-transversal CCZ gates can be dependent on the added information gathered from type-1 error-correction. Nevertheless, if the gauge change of type-1 is undesired, it can always be changed back by a subsequent type-1 correction.

Extending a Bacon-Shor code can be done fault-tolerantly. To transfer the encoded quantum state from an $m\times n$ code to an $m'\times n$ code, $m'>m$, prepare an $(m'-m)\times n$ Bacon-Shor codeblock in $\ket{\bar+_Z}$. Join the ancilla block with the initial block by measuring the $Z$-gauge operators across the boundary. This could be done by performing $m'\times n$ Steane error-correction for instance. Adding more rows instead is the Hadamard conjugate of this process. One can also remove columns by measuring the individual qubits in the $X$-basis and rows by measuring them in the $Z$-basis.

It is well known that Toffoli and Hadamard are a universal set of gates for quantum computation \cite{Aharonov2003}. Replacing Toffoli with controlled-controlled-$Z$ (CCZ) also makes a universal gate set. We now discuss how to implement Hadamard and CCZ on a Bacon-Shor code.

On symmetric (i.e.~$m\times m$) Bacon-Shor codes, logical Hadamard $\overline{H}$ is a transversal gate up to a qubit permutation \cite{Aliferis2007}. 
On asymmetric Bacon-Shor codes (i.e.~$m\times n$ with $n>m$), $\overline{H}$ can be done via teleportation \cite{Zhou2000}. Preparing $\ket{\overline{+}_Z}$, coupling to the target codeblock with $\overline{\text{CZ}}$ (which we show later is transversal), and measuring $\bar X$ on the target codeblock suffices to teleport the original encoded state to the ancilla codeblock with $\overline{H}$ applied. 
This protocol also implies CCZ is universal on its own (given $X$- and $Z$-basis state preparation and measurement). We discuss this corollary more in Appendix~\ref{no_1qubit_univ}.


To obtain computational universality, we implement logical CCZ, a three-qubit gate, with full-distance (i.e.~under circuit depolarizing noise, a distance $d$ code recovers from $\lfloor(d-1)/2\rfloor$ faulty circuit components). Assume that all code blocks (labeled $A$, $B$, $C$) begin in the $Z$-gauge. Logical CCZ, denoted $\overline{\text{CCZ}}$, can be created from physical CCZ gates in round-robin fashion \cite{Yoder2016}:
\begin{equation}\label{rr_ccz}
\overline{\text{CCZ}}=\prod_{u,v,w\in\text{supp}(\bar Z)}\text{CCZ}(u_A,v_B,w_C).
\end{equation}
All gates following $\prod$-symbols in this paper mutually commute, so ordering is unnecessary.

Ostensibly then, $m\times n$ Bacon-Shor codes would use $m^3$ physical CCZ gates to implement $\overline{\text{CCZ}}$ in a depth $m^2$ circuit, because $\bar Z$ has support $\text{supp}(\bar Z)$ of size at least $m$. To make Eq.~\eqref{rr_ccz} fault-tolerant, it is sufficient to measure all of $\bar Z_{i,j}$ after each timestep of CCZ gates. This suffices because all $X$ errors can be detected and corrected before they propagate $Z$ errors through subsequent CCZ gates. In the terminology of \cite{Yoder2016}, we say the circuit is fault-tolerant in $m^2$ pieces, a number of pieces equal to the circuit depth. However, two kinds of simplifications can generally reduce the number of pieces.

The first of these simplifications exploits the code stabilizer to reduce the depth of the circuit. By definition, for a stabilizer $s\in S$ and a state $\ket{\bar\psi}$ in the codespace, $s\ket{\bar\psi}=\ket{\bar\psi}$. Thus, we might say $s$ is an implementation of logical identity, $\bar I$. However, it is only one such implementation. For instance, controlled-$s$ and controlled-controlled-$s$ are also implementations of $\bar I$ for any control qubit(s) (note, the controls could be taken in any basis, but we will only use controls in the $Z$-basis here). In $Z$-gauge Bacon-Shor codes, $\bar Z_{i,j}\in S$ and thus,
\begin{equation}\label{cci}
\bar I = \text{CCZ}(u_A,v_B,(i,j-1)_C)\text{CCZ}(u_A,v_B,(i,j)_C),
\end{equation}
for all $u\in[m)\times[n)$, $v\in[m)\times[n)$, $i\in[m)$, and $j\in(n)$. Similar expressions hold under permutation of $A$, $B$, $C$.

By multiplying $\overline{\text{CCZ}}$ from Eq.~\ref{rr_ccz} by implementations of $\bar I$ from Eq.~\ref{cci}, we create lower depth implementations of $\overline{\text{CCZ}}$. This can be visualized as moving the control nodes of the CCZ gates across rows of a $Z$-gauge Bacon-Shor code block, thereby spreading CCZ gates across all code qubits. For $m\times m$ Bacon-Shor codes, it is trivial to reduce the depth of the circuit to $m$. For $m\times m^2$ Bacon-Shor codes, $\overline{\text{CCZ}}$ becomes depth-1, i.e.~transversal. Fig.~\ref{BS39ccz} shows the $3\times9$ case. Generally, $m\times n$ Bacon-Shor codes can implement $\overline{\text{CCZ}}$ with a depth $\lceil m^2/n\rceil$ circuit, which translates directly to a fault-tolerant $\overline{\text{CCZ}}$ in $\lceil m^2/n\rceil$ pieces, as discussed above. From this argument we see a space-time tradeoff emerge: an $m\times n$ Bacon-Shor code supports a depth $h$ circuit for $\overline{\text{CCZ}}$ if $hn\ge m^2$. The appropriate generalization for a $\overline{\text{C}^k\text{Z}}$ gate (a $Z$ with $k$ controls) is $hn\ge m^k$ for any integer $k\ge1$.

\begin{figure}
\includegraphics[width=\columnwidth]{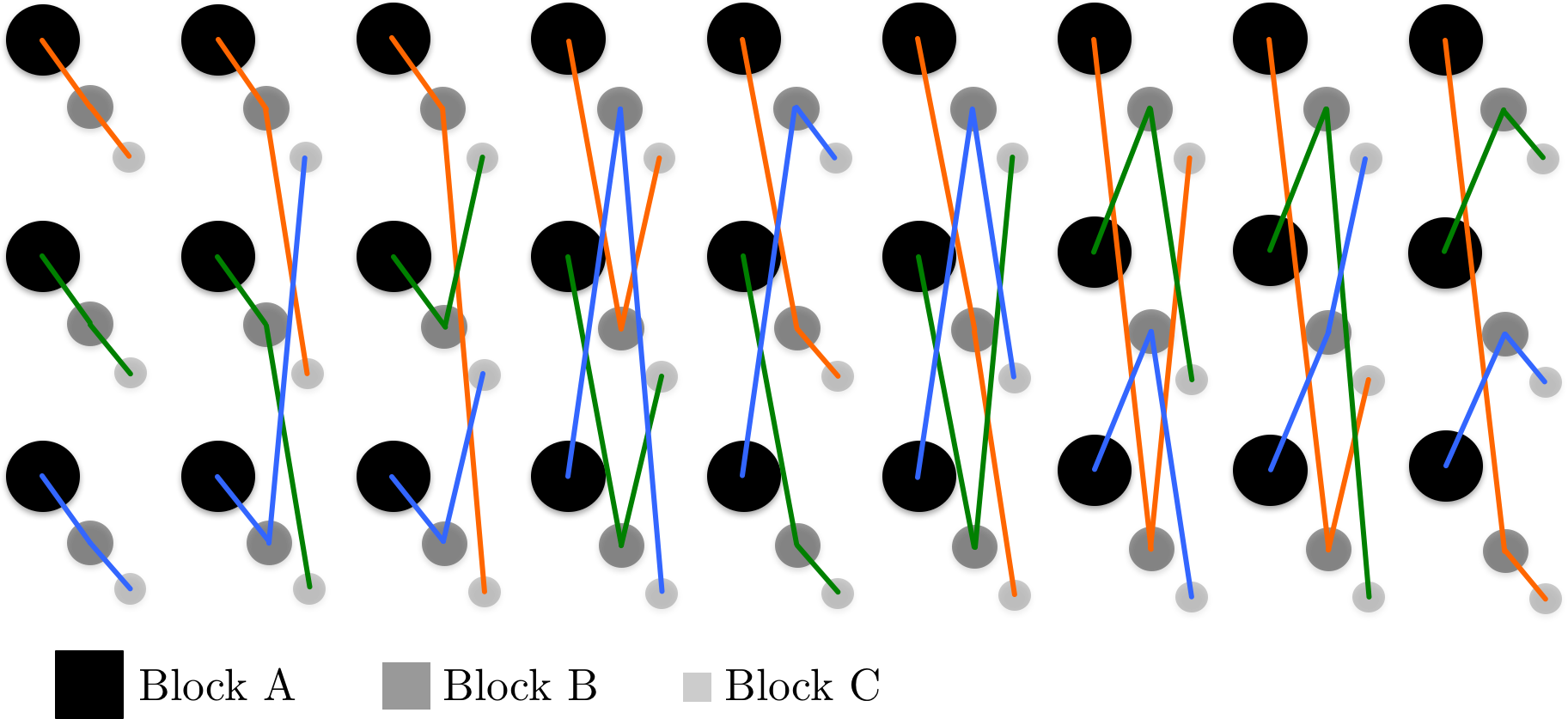}
\caption{\label{BS39ccz} Visualizing transversal $\overline{\text{CCZ}}$ on the $3\times9$ Bacon-Shor code. Orange, green, and blue jointed lines indicate physical CCZ gates between qubits of the three codeblocks.}
\end{figure}

Explicitly, these circuits for $\overline{\text{C}^k\text{Z}}$ on $m\times n$ codes can be arranged in terms of subcircuits indexed by a $k$-digit $m$-ary number $p$. The subcircuit $p=p_{k-1}p_{k-2}\dots p_0$ is
\begin{equation}
C_p=\prod_{i=0}^{m-1}\text{C}^{k}\text{Z}\left((i,j),(i\oplus p_{k-1},j),\dots(i\oplus p_0,j)\right)
\end{equation}
with addition $\oplus$ modulo $m$ and for some choice of column $j$ depending on $p$. The standard choice would be $j=p\text{ }(\text{mod } n)$. Subcircuits with the same value of $j$ must be done in subsequent timesteps, contributing to the circuit depth $h=\lceil m^k/n\rceil$. The product of all subcircuits implements the logical gate: $\overline{\text{C}^k\text{Z}}=\prod_{p=0}^{m^k-1}C_p$. The advantage of this organization is that, if the three interacting codeblocks are layered upon one another in the plane, the physical $\text{C}^k\text{Z}$ gates interact qubits within a column, i.e.~distanced from one another by at most $m-1$ lattice spacings (rather than the worst case $n-1\ge m-1$). 

That $\overline{\text{C}^k\text{Z}}$ with $k>1$ on 2D Bacon-Shor codes must use long-range gates (or, equivalently, SWAP circuits of non-constant depth) is necessitated by arguments similar to those of Bravyi-K\"{o}nig \cite{Bravyi2013} for topological subspace codes. In Appendix~\ref{BK-BS}, we present this argument and note that for all $k$ our constructions use optimal  gate range.

As a practical matter, either substantially extending the code or using a larger depth circuit and correcting $X$ errors after every timestep may be unappealing. As our second simplification, we can reduce the number of intermediate error corrections by using an idea from \cite{Yoder2016} called 2-transversality, wherein each qubit interacts with at most two qubits from each other codeblock. We leave a more thorough description of this simplification to Appendix~\ref{pauli_cond}, but note here that it can reduce the number of pieces used to implement $\overline{\text{C}^k\text{Z}}$ to $\lceil\lceil m/2\rceil^k/n\rceil$. In particular, a $3\times4$ Bacon-Shor code (just 12 code qubits) can implement $\overline{\text{CCZ}}$ without intermediate error-correction.

However, we can do even better, implementing $\overline{\text{CCZ}}$ on smaller Bacon-Shor codes without the need for intermediate error-correction. In doing so, we no longer assume that error-correction consists of projection to the stabilizer space with strictly Pauli recovery. This highlights the inequivalence of the Knill-Laflamme conditions for the existence of a general recovery map and the (necessarily stronger) conditions for the existence of Pauli recovery. See Appendix~\ref{pauli_cond} for these conditions.

Using this more general error-correction, our smallest $\overline{\text{CCZ}}$ construction is built upon the $3\times3$ Bacon-Shor code in the $Z$-gauge. The logical gate takes three timesteps of CCZ gates followed by error-correction. Explicitly, the CCZ circuit at timestep $t\in\{0,1,2\}$ is
\begin{equation}\label{CCZ33}
\prod_{i=0}^{2}\prod_{j=0}^{2}\text{CCZ}\big((i+f(j,t),j)_A,(i,j)_B,(i+g(j,t),j)_C\big)
\end{equation}
for $f(j,t)=j+\lfloor t/2\rfloor$, $g(j,t)=-j+\lceil t/2\rceil$. The error-correction measures the $Z$-gauge operators, applies Pauli $X$ and CZ corrections, then measures the $X$-gauge operators and applies Pauli $Z$ corrections. 
We argue this is fault-tolerant at the end of Appendix~\ref{pauli_cond}.

To illustrate the advantages of such a small $\overline{\text{CCZ}}$ construction, we provide a comparison with a magic state method for ion traps. In the MUSICQ architecture \cite{Monroe2014}, qubits are grouped into elementary logical units, or ELUs, with $N_q$ qubits per ELU (where $N_q\le100$ is considered daunting but possible) all of which may interact via two-qubit gates. Qubits within different ELUs interact by teleportation through shared entanglement generated by photon interference. Since it takes roughly two orders of magnitude longer to generate entanglement between ELUs than it takes to interact qubits within an ELU, we set aside $C_q\approx N_q/2$ qubits per ELU for interaction with the other ELUs. Due to excellent state lifetimes in ion traps, it is reasonable to assume that entanglement is generated and stored just before a computation. 

Thus, the MUSICQ architecture excels at implementing non-local gates. This comes at the cost of limited, though not nonexistent, parallel operations. To compare our Bacon-Shor $\overline{\text{CCZ}}$ with magic-states for Steane's 7-qubit code, we follow \cite{Monroe2014} and assume up to twelve multi-qubit operations (CNOTs and CCZs) can be performed in parallel within an ELU. We idealize single-qubit gate and state preparation time as 1$\mu s$, 2-qubit and 3-qubit gate time  as 10$\mu s$, and measurement time as 30$\mu s$ \cite{Monroe2014}.

\setlength\extrarowheight{4pt}
\begin{table}
\begin{tabular}{|c|c|c|c|c|}
\hline
& Circ.~Vol. & Spacetime & Time & Qubits\\
\hline\hline
Magic 7 & 1,400 & 19,900 $\mu s\times\text{qub.}$ & 940$\mu s$ & 66\\
\hline
Magic 9 & 1,100 & 15,800 $\mu s\times\text{qub.}$ & 910$\mu s$ & 81\\
\hline
BS $3\times3$ & 440 & 5,540 $\mu s\times\text{qub.}$ & 190$\mu s$ & 54\\
\hline
\end{tabular}
\caption{\label{tab1} Rough comparison of the magic-state preparation and injection protocol on the 7-qubit code considered in \cite{Monroe2014}, the analogous protocol for the $3\times3$ code, and our optimized $3\times3$ construction for implementing $\overline{\text{CCZ}}$. Circuit volume counts circuit components weighted by the number of qubits involved, while spacetime volume does the same but also weighted by the time of physical implementation. By symmetry, numbers for logical Toffoli are identical.}
\end{table}

\begin{table}
\begin{tabular}{|c|c|c|}
\hline
Gate & $p$ & $p/10^{2-q}$\\
\hline\hline
$I$ \& $H$ & $4.1\times10^{-4}$ & $1.9\times10^{-4}$\\
\hline
CNOT & $1.4\times10^{-4}$ & $5.3\times10^{-4}$\\
\hline
CCZ & $8.2\times10^{-5}$ & $6.1\times10^{-4}$\\
\hline
\end{tabular}
\caption{\label{tab2} A table of exREC \cite{Aliferis2006} pseudothreshold \cite{Svore2006} lower bounds for gates on the $3\times3$ Bacon-Shor code. The thresholds of all gates are calculated by exact counting. Shown are two versions of the circuit depolarizing noise model, one in which components all err with the same probability $p$ and the other in which a $q$-qubit component errs with probability $p/10^{2-q}$. The thresholds shown are the error rates for CNOT below which the encoded gate is better than physical.
}
\end{table}

Our comparison is shown in Table~\ref{tab1}. Roughly we expect a Bacon-Shor $\overline{\text{CCZ}}$ to be 4-times faster than using a magic state while also using fewer qubits. Also note that in all scenarios, the magic-state approach uses postselection to both prepare the magic-state and prepare CAT or Steane states for error-correction. In contrast, the Bacon-Shor $\overline{\text{CCZ}}$ never uses postselection. Magic-state $\overline{\text{CCZ}}$ for the 7-qubit code, following the design in \cite{Zhou2000,Monroe2014}, uses 56-126 qubits depending on error-correction scheme, with 66 striking a balance between expected error rate, qubit count, and gate time. Our Bacon-Shor CCZ can use between 30-81 qubits, with 54 ($9\times3$ data plus $9\times3$ ancillas) striking a good balance. Notice that the Bacon-Shor scheme leaves a comfortable 46 qubits per ELU for entanglement generation. Physical qubits are reusable in our estimations and we assume all qubits are in the same ELU. If not, teleporting codeblocks to the same ``interacting'' ELU adds a small overhead to the numbers in Table~\ref{tab1}.
More circuit details are in Appendix~\ref{gate_detail}.

We have also calculated pseudothresholds for our universal gate set on the $3\times3$ Bacon-Shor code using exact counting. The results are shown in Table~\ref{tab2}. Calculation details and logical error rate plots are in Appendix~\ref{thr_detail}.

What is the most practical route toward scalable, universal, fault-tolerant quantum computation? The question is fraught with many dependencies and subtleties, such as hardware capabilities, the gate set implemented, the noise model and rate, etc. If we drop the scalability requirement and set our sights instead on near term fault-tolerance at low distance, we can start optimizing.

It is in this spirit that we have developed our low-overhead universal computing scheme for the $3\times 3$ Bacon-Shor. And the results are relatively promising --- a high pseudothreshold and a size comfortably fitting into quantum computing architectures of the near future.

However, can we put scalability back without too much cost? Indeed, with the Bacon-Shor codes, this seems challenging. It is well-known that Bacon-Shor codes fail to have an asymptotic threshold as a topological family. Concatenation offers a threshold in theory, but in practice it is hard to implement and pays a price in overhead.

An intriguing alternative to improve scalability is to exploit the limited ability of 3D Bacon-Shor codes to be self-correcting \cite{Bacon2006}. $Z$-gauge Bacon-Shor codes in 3D are codespace equivalent to asymmetric codes in 2D. However, what were 1-dimensional rows of qubits in 2D codes become 2-dimensional lattices in 3D. The Ising $ZZ$ interaction can be applied between all lattice neighbors. Thus, the $Z$-gauge 3D Bacon-Shor code inherits the thermal (though not Hamiltonian-perturbative \cite{Pastawski2010}) stability of the 2D Ising model to protect $\bar X$. Moreover, since $X$ errors are suppressed in this manner, the need for intermediate correction in our circuits is reduced. Though not completely solving the scalability problem ($\bar Z$ is still vulnerable) it could offer a substantial simplification of our construction for architectures that can support it.

The author gratefully acknowledges Ken Brown, Steve Flammia, and Ryuji Takagi for discussions and their comments on the manuscript, as well as Cody Jones for ideas relating Bacon-Shor codes and the surface code. This project was supported financially by the Department of Defense (DoD) through the National Defense Science and Engineering Graduate (NDSEG) Fellowship program.

\bibliography{References.bib}

\begin{thebibliography}{31}%
\makeatletter
\providecommand \@ifxundefined [1]{%
 \@ifx{#1\undefined}
}%
\providecommand \@ifnum [1]{%
 \ifnum #1\expandafter \@firstoftwo
 \else \expandafter \@secondoftwo
 \fi
}%
\providecommand \@ifx [1]{%
 \ifx #1\expandafter \@firstoftwo
 \else \expandafter \@secondoftwo
 \fi
}%
\providecommand \natexlab [1]{#1}%
\providecommand \enquote  [1]{``#1''}%
\providecommand \bibnamefont  [1]{#1}%
\providecommand \bibfnamefont [1]{#1}%
\providecommand \citenamefont [1]{#1}%
\providecommand \href@noop [0]{\@secondoftwo}%
\providecommand \href [0]{\begingroup \@sanitize@url \@href}%
\providecommand \@href[1]{\@@startlink{#1}\@@href}%
\providecommand \@@href[1]{\endgroup#1\@@endlink}%
\providecommand \@sanitize@url [0]{\catcode `\\12\catcode `\$12\catcode
  `\&12\catcode `\#12\catcode `\^12\catcode `\_12\catcode `\%12\relax}%
\providecommand \@@startlink[1]{}%
\providecommand \@@endlink[0]{}%
\providecommand \url  [0]{\begingroup\@sanitize@url \@url }%
\providecommand \@url [1]{\endgroup\@href {#1}{\urlprefix }}%
\providecommand \urlprefix  [0]{URL }%
\providecommand \Eprint [0]{\href }%
\providecommand \doibase [0]{http://dx.doi.org/}%
\providecommand \selectlanguage [0]{\@gobble}%
\providecommand \bibinfo  [0]{\@secondoftwo}%
\providecommand \bibfield  [0]{\@secondoftwo}%
\providecommand \translation [1]{[#1]}%
\providecommand \BibitemOpen [0]{}%
\providecommand \bibitemStop [0]{}%
\providecommand \bibitemNoStop [0]{.\EOS\space}%
\providecommand \EOS [0]{\spacefactor3000\relax}%
\providecommand \BibitemShut  [1]{\csname bibitem#1\endcsname}%
\let\auto@bib@innerbib\@empty
\bibitem [{\citenamefont {Shor}(1995)}]{Shor1995}%
  \BibitemOpen
  \bibfield  {author} {\bibinfo {author} {\bibfnamefont {Peter~W}\ \bibnamefont
  {Shor}},\ }\bibfield  {title} {\enquote {\bibinfo {title} {Scheme for
  reducing decoherence in quantum computer memory},}\ }\href {\doibase
  10.1103/PhysRevA.52.R2493} {\bibfield  {journal} {\bibinfo  {journal}
  {Physical Review A}\ }\textbf {\bibinfo {volume} {52}},\ \bibinfo {pages}
  {R2493} (\bibinfo {year} {1995})}\BibitemShut {NoStop}%
\bibitem [{\citenamefont {Bacon}(2006)}]{Bacon2006}%
  \BibitemOpen
  \bibfield  {author} {\bibinfo {author} {\bibfnamefont {Dave}\ \bibnamefont
  {Bacon}},\ }\bibfield  {title} {\enquote {\bibinfo {title} {Operator quantum
  error-correcting subsystems for self-correcting quantum memories},}\ }\href
  {\doibase 10.1103/PhysRevA.73.012340} {\bibfield  {journal} {\bibinfo
  {journal} {Physical Review A}\ }\textbf {\bibinfo {volume} {73}},\ \bibinfo
  {pages} {012340} (\bibinfo {year} {2006})}\BibitemShut {NoStop}%
\bibitem [{\citenamefont {Aliferis}\ and\ \citenamefont
  {Cross}(2007)}]{Aliferis2007}%
  \BibitemOpen
  \bibfield  {author} {\bibinfo {author} {\bibfnamefont {Panos}\ \bibnamefont
  {Aliferis}}\ and\ \bibinfo {author} {\bibfnamefont {Andrew~W}\ \bibnamefont
  {Cross}},\ }\bibfield  {title} {\enquote {\bibinfo {title} {Subsystem fault
  tolerance with the {Bacon-Shor} code},}\ }\href {\doibase
  10.1103/PhysRevLett.98.220502} {\bibfield  {journal} {\bibinfo  {journal}
  {Physical Review Letters}\ }\textbf {\bibinfo {volume} {98}},\ \bibinfo
  {pages} {220502} (\bibinfo {year} {2007})}\BibitemShut {NoStop}%
\bibitem [{\citenamefont {Aliferis}(2007)}]{Aliferis2007b}%
  \BibitemOpen
  \bibfield  {author} {\bibinfo {author} {\bibfnamefont {Panos}\ \bibnamefont
  {Aliferis}},\ }\bibfield  {title} {\enquote {\bibinfo {title} {Level
  reduction and the quantum threshold theorem},}\ }\href@noop {} {\bibfield
  {journal} {\bibinfo  {journal} {arXiv preprint quant-ph/0703230}\ } (\bibinfo
  {year} {2007})}\BibitemShut {NoStop}%
\bibitem [{\citenamefont {Bravyi}\ and\ \citenamefont
  {Kitaev}(2005)}]{Bravyi2005}%
  \BibitemOpen
  \bibfield  {author} {\bibinfo {author} {\bibfnamefont {Sergey}\ \bibnamefont
  {Bravyi}}\ and\ \bibinfo {author} {\bibfnamefont {Alexei}\ \bibnamefont
  {Kitaev}},\ }\bibfield  {title} {\enquote {\bibinfo {title} {Universal
  quantum computation with ideal clifford gates and noisy ancillas},}\ }\href
  {\doibase 10.1103/PhysRevA.71.022316} {\bibfield  {journal} {\bibinfo
  {journal} {Physical Review A}\ }\textbf {\bibinfo {volume} {71}},\ \bibinfo
  {pages} {022316} (\bibinfo {year} {2005})}\BibitemShut {NoStop}%
\bibitem [{\citenamefont {Shor}(1996)}]{Shor1996}%
  \BibitemOpen
  \bibfield  {author} {\bibinfo {author} {\bibfnamefont {Peter~W}\ \bibnamefont
  {Shor}},\ }\bibfield  {title} {\enquote {\bibinfo {title} {Fault-tolerant
  quantum computation},}\ }in\ \href@noop {} {\emph {\bibinfo {booktitle}
  {Foundations of Computer Science, 1996. Proceedings., 37th Annual Symposium
  on}}}\ (\bibinfo {organization} {IEEE},\ \bibinfo {year} {1996})\ pp.\
  \bibinfo {pages} {56--65}\BibitemShut {NoStop}%
\bibitem [{\citenamefont {Aliferis}\ \emph {et~al.}(2006)\citenamefont
  {Aliferis}, \citenamefont {Gottesman},\ and\ \citenamefont
  {Preskill}}]{Aliferis2006}%
  \BibitemOpen
  \bibfield  {author} {\bibinfo {author} {\bibfnamefont {Panos}\ \bibnamefont
  {Aliferis}}, \bibinfo {author} {\bibfnamefont {Daniel}\ \bibnamefont
  {Gottesman}}, \ and\ \bibinfo {author} {\bibfnamefont {John}\ \bibnamefont
  {Preskill}},\ }\bibfield  {title} {\enquote {\bibinfo {title} {Quantum
  accuracy threshold for concatenated distance-3 codes},}\ }\href@noop {}
  {\bibfield  {journal} {\bibinfo  {journal} {Quantum Information \&
  Computation}\ }\textbf {\bibinfo {volume} {6}},\ \bibinfo {pages} {97--165}
  (\bibinfo {year} {2006})}\BibitemShut {NoStop}%
\bibitem [{\citenamefont {Paetznick}\ and\ \citenamefont
  {Reichardt}(2013)}]{Paetznick2013}%
  \BibitemOpen
  \bibfield  {author} {\bibinfo {author} {\bibfnamefont {Adam}\ \bibnamefont
  {Paetznick}}\ and\ \bibinfo {author} {\bibfnamefont {Ben~W}\ \bibnamefont
  {Reichardt}},\ }\bibfield  {title} {\enquote {\bibinfo {title} {Universal
  fault-tolerant quantum computation with only transversal gates and error
  correction},}\ }\href {\doibase 10.1103/PhysRevLett.111.090505} {\bibfield
  {journal} {\bibinfo  {journal} {Physical Review Letters}\ }\textbf {\bibinfo
  {volume} {111}},\ \bibinfo {pages} {090505} (\bibinfo {year}
  {2013})}\BibitemShut {NoStop}%
\bibitem [{\citenamefont {Anderson}\ \emph {et~al.}(2014)\citenamefont
  {Anderson}, \citenamefont {Duclos-Cianci},\ and\ \citenamefont
  {Poulin}}]{Anderson2014}%
  \BibitemOpen
  \bibfield  {author} {\bibinfo {author} {\bibfnamefont {Jonas~T}\ \bibnamefont
  {Anderson}}, \bibinfo {author} {\bibfnamefont {Guillaume}\ \bibnamefont
  {Duclos-Cianci}}, \ and\ \bibinfo {author} {\bibfnamefont {David}\
  \bibnamefont {Poulin}},\ }\bibfield  {title} {\enquote {\bibinfo {title}
  {Fault-tolerant conversion between the {Steane} and {Reed-Muller} quantum
  codes},}\ }\href {\doibase 10.1103/PhysRevLett.113.080501} {\bibfield
  {journal} {\bibinfo  {journal} {Physical Review Letters}\ }\textbf {\bibinfo
  {volume} {113}},\ \bibinfo {pages} {080501} (\bibinfo {year}
  {2014})}\BibitemShut {NoStop}%
\bibitem [{\citenamefont {Jochym-O’Connor}\ and\ \citenamefont
  {Laflamme}(2014)}]{Jochym2014}%
  \BibitemOpen
  \bibfield  {author} {\bibinfo {author} {\bibfnamefont {Tomas}\ \bibnamefont
  {Jochym-O’Connor}}\ and\ \bibinfo {author} {\bibfnamefont {Raymond}\
  \bibnamefont {Laflamme}},\ }\bibfield  {title} {\enquote {\bibinfo {title}
  {Using concatenated quantum codes for universal fault-tolerant quantum
  gates},}\ }\href {\doibase 10.1103/PhysRevLett.112.010505} {\bibfield
  {journal} {\bibinfo  {journal} {Physical Review Letters}\ }\textbf {\bibinfo
  {volume} {112}},\ \bibinfo {pages} {010505} (\bibinfo {year}
  {2014})}\BibitemShut {NoStop}%
\bibitem [{\citenamefont {Yoder}\ \emph {et~al.}(2016)\citenamefont {Yoder},
  \citenamefont {Takagi},\ and\ \citenamefont {Chuang}}]{Yoder2016}%
  \BibitemOpen
  \bibfield  {author} {\bibinfo {author} {\bibfnamefont {Theodore~J}\
  \bibnamefont {Yoder}}, \bibinfo {author} {\bibfnamefont {Ryuji}\ \bibnamefont
  {Takagi}}, \ and\ \bibinfo {author} {\bibfnamefont {Isaac~L}\ \bibnamefont
  {Chuang}},\ }\bibfield  {title} {\enquote {\bibinfo {title} {Universal
  fault-tolerant gates on concatenated stabilizer codes},}\ }\href {\doibase
  10.1103/PhysRevX.6.031039} {\bibfield  {journal} {\bibinfo  {journal}
  {Physical Review X}\ }\textbf {\bibinfo {volume} {6}},\ \bibinfo {pages}
  {031039} (\bibinfo {year} {2016})}\BibitemShut {NoStop}%
\bibitem [{\citenamefont {Gottesman}\ and\ \citenamefont
  {Chuang}(1999)}]{Gottesman1999}%
  \BibitemOpen
  \bibfield  {author} {\bibinfo {author} {\bibfnamefont {Daniel}\ \bibnamefont
  {Gottesman}}\ and\ \bibinfo {author} {\bibfnamefont {Isaac~L}\ \bibnamefont
  {Chuang}},\ }\bibfield  {title} {\enquote {\bibinfo {title} {Demonstrating
  the viability of universal quantum computation using teleportation and
  single-qubit operations},}\ }\href {\doibase 10.1038/46503} {\bibfield
  {journal} {\bibinfo  {journal} {Nature}\ }\textbf {\bibinfo {volume} {402}},\
  \bibinfo {pages} {390--393} (\bibinfo {year} {1999})}\BibitemShut {NoStop}%
\bibitem [{\citenamefont {Knill}\ and\ \citenamefont
  {Laflamme}(1997)}]{Knill1997}%
  \BibitemOpen
  \bibfield  {author} {\bibinfo {author} {\bibfnamefont {Emanuel}\ \bibnamefont
  {Knill}}\ and\ \bibinfo {author} {\bibfnamefont {Raymond}\ \bibnamefont
  {Laflamme}},\ }\bibfield  {title} {\enquote {\bibinfo {title} {Theory of
  quantum error-correcting codes},}\ }\href {\doibase
  10.1103/PhysRevLett.84.2525} {\bibfield  {journal} {\bibinfo  {journal}
  {Physical Review A}\ }\textbf {\bibinfo {volume} {55}},\ \bibinfo {pages}
  {900} (\bibinfo {year} {1997})}\BibitemShut {NoStop}%
\bibitem [{\citenamefont {Svore}\ \emph {et~al.}(2006)\citenamefont {Svore},
  \citenamefont {Cross}, \citenamefont {Chuang},\ and\ \citenamefont
  {Aho}}]{Svore2006}%
  \BibitemOpen
  \bibfield  {author} {\bibinfo {author} {\bibfnamefont {Krysta~M}\
  \bibnamefont {Svore}}, \bibinfo {author} {\bibfnamefont {Andrew~W}\
  \bibnamefont {Cross}}, \bibinfo {author} {\bibfnamefont {Isaac~L}\
  \bibnamefont {Chuang}}, \ and\ \bibinfo {author} {\bibfnamefont {Alfred~V}\
  \bibnamefont {Aho}},\ }\bibfield  {title} {\enquote {\bibinfo {title} {A
  flow-map model for analyzing pseudothresholds in fault-tolerant quantum
  computing},}\ }\href@noop {} {\bibfield  {journal} {\bibinfo  {journal}
  {Quantum Information \& Computation}\ }\textbf {\bibinfo {volume} {6}},\
  \bibinfo {pages} {193--212} (\bibinfo {year} {2006})}\BibitemShut {NoStop}%
\bibitem [{\citenamefont {Monroe}\ \emph {et~al.}(2014)\citenamefont {Monroe},
  \citenamefont {Raussendorf}, \citenamefont {Ruthven}, \citenamefont {Brown},
  \citenamefont {Maunz}, \citenamefont {Duan},\ and\ \citenamefont
  {Kim}}]{Monroe2014}%
  \BibitemOpen
  \bibfield  {author} {\bibinfo {author} {\bibfnamefont {C}~\bibnamefont
  {Monroe}}, \bibinfo {author} {\bibfnamefont {R}~\bibnamefont {Raussendorf}},
  \bibinfo {author} {\bibfnamefont {A}~\bibnamefont {Ruthven}}, \bibinfo
  {author} {\bibfnamefont {KR}~\bibnamefont {Brown}}, \bibinfo {author}
  {\bibfnamefont {P}~\bibnamefont {Maunz}}, \bibinfo {author} {\bibfnamefont
  {L-M}\ \bibnamefont {Duan}}, \ and\ \bibinfo {author} {\bibfnamefont
  {J}~\bibnamefont {Kim}},\ }\bibfield  {title} {\enquote {\bibinfo {title}
  {Large-scale modular quantum-computer architecture with atomic memory and
  photonic interconnects},}\ }\href {\doibase 10.1103/PhysRevA.89.022317}
  {\bibfield  {journal} {\bibinfo  {journal} {Physical Review A}\ }\textbf
  {\bibinfo {volume} {89}},\ \bibinfo {pages} {022317} (\bibinfo {year}
  {2014})}\BibitemShut {NoStop}%
\bibitem [{\citenamefont {Shor}(1999)}]{Shor1999}%
  \BibitemOpen
  \bibfield  {author} {\bibinfo {author} {\bibfnamefont {Peter~W}\ \bibnamefont
  {Shor}},\ }\bibfield  {title} {\enquote {\bibinfo {title} {Polynomial-time
  algorithms for prime factorization and discrete logarithms on a quantum
  computer},}\ }\href {\doibase 10.1137/S0097539795293172} {\bibfield
  {journal} {\bibinfo  {journal} {SIAM review}\ }\textbf {\bibinfo {volume}
  {41}},\ \bibinfo {pages} {303--332} (\bibinfo {year} {1999})}\BibitemShut
  {NoStop}%
\bibitem [{\citenamefont {Brooks}\ and\ \citenamefont
  {Preskill}(2013)}]{Brooks2013}%
  \BibitemOpen
  \bibfield  {author} {\bibinfo {author} {\bibfnamefont {Peter}\ \bibnamefont
  {Brooks}}\ and\ \bibinfo {author} {\bibfnamefont {John}\ \bibnamefont
  {Preskill}},\ }\bibfield  {title} {\enquote {\bibinfo {title} {Fault-tolerant
  quantum computation with asymmetric {Bacon-Shor} codes},}\ }\href {\doibase
  10.1103/PhysRevA.87.032310} {\bibfield  {journal} {\bibinfo  {journal}
  {Physical Review A}\ }\textbf {\bibinfo {volume} {87}},\ \bibinfo {pages}
  {032310} (\bibinfo {year} {2013})}\BibitemShut {NoStop}%
\bibitem [{\citenamefont {Tomita}\ and\ \citenamefont
  {Svore}(2014)}]{Tomita2014}%
  \BibitemOpen
  \bibfield  {author} {\bibinfo {author} {\bibfnamefont {Yu}~\bibnamefont
  {Tomita}}\ and\ \bibinfo {author} {\bibfnamefont {Krysta~M}\ \bibnamefont
  {Svore}},\ }\bibfield  {title} {\enquote {\bibinfo {title} {Low-distance
  surface codes under realistic quantum noise},}\ }\href {\doibase
  10.1103/PhysRevA.90.062320} {\bibfield  {journal} {\bibinfo  {journal}
  {Physical Review A}\ }\textbf {\bibinfo {volume} {90}},\ \bibinfo {pages}
  {062320} (\bibinfo {year} {2014})}\BibitemShut {NoStop}%
\bibitem [{\citenamefont {Steane}(1997)}]{Steane1997}%
  \BibitemOpen
  \bibfield  {author} {\bibinfo {author} {\bibfnamefont {Andrew~M}\
  \bibnamefont {Steane}},\ }\bibfield  {title} {\enquote {\bibinfo {title}
  {Active stabilization, quantum computation, and quantum state synthesis},}\
  }\href {\doibase 10.1103/PhysRevLett.78.2252} {\bibfield  {journal} {\bibinfo
   {journal} {Phys. Rev. Lett.}\ }\textbf {\bibinfo {volume} {78}},\ \bibinfo
  {pages} {2252} (\bibinfo {year} {1997})}\BibitemShut {NoStop}%
\bibitem [{\citenamefont {Aharonov}(2003)}]{Aharonov2003}%
  \BibitemOpen
  \bibfield  {author} {\bibinfo {author} {\bibfnamefont {Dorit}\ \bibnamefont
  {Aharonov}},\ }\bibfield  {title} {\enquote {\bibinfo {title} {A simple proof
  that {Toffoli} and {Hadamard} are quantum universal},}\ }\href@noop {}
  {\bibfield  {journal} {\bibinfo  {journal} {arXiv preprint quant-ph/0301040}\
  } (\bibinfo {year} {2003})}\BibitemShut {NoStop}%
\bibitem [{\citenamefont {Zhou}\ \emph {et~al.}(2000)\citenamefont {Zhou},
  \citenamefont {Leung},\ and\ \citenamefont {Chuang}}]{Zhou2000}%
  \BibitemOpen
  \bibfield  {author} {\bibinfo {author} {\bibfnamefont {Xinlan}\ \bibnamefont
  {Zhou}}, \bibinfo {author} {\bibfnamefont {Debbie~W}\ \bibnamefont {Leung}},
  \ and\ \bibinfo {author} {\bibfnamefont {Isaac~L}\ \bibnamefont {Chuang}},\
  }\bibfield  {title} {\enquote {\bibinfo {title} {Methodology for quantum
  logic gate construction},}\ }\href {\doibase 10.1103/PhysRevA.62.052316}
  {\bibfield  {journal} {\bibinfo  {journal} {Physical Review A}\ }\textbf
  {\bibinfo {volume} {62}},\ \bibinfo {pages} {052316} (\bibinfo {year}
  {2000})}\BibitemShut {NoStop}%
\bibitem [{\citenamefont {Bravyi}\ and\ \citenamefont
  {K{\"o}nig}(2013)}]{Bravyi2013}%
  \BibitemOpen
  \bibfield  {author} {\bibinfo {author} {\bibfnamefont {Sergey}\ \bibnamefont
  {Bravyi}}\ and\ \bibinfo {author} {\bibfnamefont {Robert}\ \bibnamefont
  {K{\"o}nig}},\ }\bibfield  {title} {\enquote {\bibinfo {title}
  {Classification of topologically protected gates for local stabilizer
  codes},}\ }\href {\doibase 10.1103/PhysRevLett.110.170503} {\bibfield
  {journal} {\bibinfo  {journal} {Physical Review Letters}\ }\textbf {\bibinfo
  {volume} {110}},\ \bibinfo {pages} {170503} (\bibinfo {year}
  {2013})}\BibitemShut {NoStop}%
\bibitem [{\citenamefont {Pastawski}\ \emph {et~al.}(2010)\citenamefont
  {Pastawski}, \citenamefont {Kay}, \citenamefont {Schuch},\ and\ \citenamefont
  {Cirac}}]{Pastawski2010}%
  \BibitemOpen
  \bibfield  {author} {\bibinfo {author} {\bibfnamefont {Fernando}\
  \bibnamefont {Pastawski}}, \bibinfo {author} {\bibfnamefont {Alastair}\
  \bibnamefont {Kay}}, \bibinfo {author} {\bibfnamefont {Norbert}\ \bibnamefont
  {Schuch}}, \ and\ \bibinfo {author} {\bibfnamefont {J~Ignacio}\ \bibnamefont
  {Cirac}},\ }\bibfield  {title} {\enquote {\bibinfo {title} {Limitations of
  passive protection of quantum information},}\ }\href@noop {} {\bibfield
  {journal} {\bibinfo  {journal} {Quantum Information \& Computation}\ }\textbf
  {\bibinfo {volume} {10}},\ \bibinfo {pages} {580--618} (\bibinfo {year}
  {2010})}\BibitemShut {NoStop}%
\bibitem [{\citenamefont {Eastin}\ and\ \citenamefont
  {Knill}(2009)}]{Eastin2009}%
  \BibitemOpen
  \bibfield  {author} {\bibinfo {author} {\bibfnamefont {Bryan}\ \bibnamefont
  {Eastin}}\ and\ \bibinfo {author} {\bibfnamefont {Emanuel}\ \bibnamefont
  {Knill}},\ }\bibfield  {title} {\enquote {\bibinfo {title} {Restrictions on
  transversal encoded quantum gate sets},}\ }\href {\doibase
  10.1103/PhysRevLett.102.110502} {\bibfield  {journal} {\bibinfo  {journal}
  {Physical Review Letters}\ }\textbf {\bibinfo {volume} {102}},\ \bibinfo
  {pages} {110502} (\bibinfo {year} {2009})}\BibitemShut {NoStop}%
\bibitem [{\citenamefont {Pastawski}\ and\ \citenamefont
  {Yoshida}(2015)}]{Pastawski2015}%
  \BibitemOpen
  \bibfield  {author} {\bibinfo {author} {\bibfnamefont {Fernando}\
  \bibnamefont {Pastawski}}\ and\ \bibinfo {author} {\bibfnamefont {Beni}\
  \bibnamefont {Yoshida}},\ }\bibfield  {title} {\enquote {\bibinfo {title}
  {Fault-tolerant logical gates in quantum error-correcting codes},}\ }\href
  {\doibase 10.1103/PhysRevA.91.012305} {\bibfield  {journal} {\bibinfo
  {journal} {Physical Review A}\ }\textbf {\bibinfo {volume} {91}},\ \bibinfo
  {pages} {012305} (\bibinfo {year} {2015})}\BibitemShut {NoStop}%
\bibitem [{\citenamefont {Nielsen}\ and\ \citenamefont
  {Chuang}(2000)}]{Nielsen2000}%
  \BibitemOpen
  \bibfield  {author} {\bibinfo {author} {\bibfnamefont {Michael~A}\
  \bibnamefont {Nielsen}}\ and\ \bibinfo {author} {\bibfnamefont {Isaac~L}\
  \bibnamefont {Chuang}},\ }\bibfield  {title} {\enquote {\bibinfo {title}
  {Quantum information and quantum computation},}\ }\href@noop {} {\bibfield
  {journal} {\bibinfo  {journal} {Cambridge: Cambridge University Press}\
  }\textbf {\bibinfo {volume} {2}},\ \bibinfo {pages} {23} (\bibinfo {year}
  {2000})}\BibitemShut {NoStop}%
\bibitem [{\citenamefont {Leung}\ \emph {et~al.}(1997)\citenamefont {Leung},
  \citenamefont {Nielsen}, \citenamefont {Chuang},\ and\ \citenamefont
  {Yamamoto}}]{Leung1997}%
  \BibitemOpen
  \bibfield  {author} {\bibinfo {author} {\bibfnamefont {Debbie~W}\
  \bibnamefont {Leung}}, \bibinfo {author} {\bibfnamefont {Michael~A}\
  \bibnamefont {Nielsen}}, \bibinfo {author} {\bibfnamefont {Isaac~L}\
  \bibnamefont {Chuang}}, \ and\ \bibinfo {author} {\bibfnamefont {Yoshihisa}\
  \bibnamefont {Yamamoto}},\ }\bibfield  {title} {\enquote {\bibinfo {title}
  {Approximate quantum error correction can lead to better codes},}\ }\href
  {\doibase 10.1103/PhysRevA.56.2567} {\bibfield  {journal} {\bibinfo
  {journal} {Physical Review A}\ }\textbf {\bibinfo {volume} {56}},\ \bibinfo
  {pages} {2567} (\bibinfo {year} {1997})}\BibitemShut {NoStop}%
\bibitem [{\citenamefont {B{\'e}ny}\ and\ \citenamefont
  {Oreshkov}(2010)}]{Beny2010}%
  \BibitemOpen
  \bibfield  {author} {\bibinfo {author} {\bibfnamefont {C{\'e}dric}\
  \bibnamefont {B{\'e}ny}}\ and\ \bibinfo {author} {\bibfnamefont {Ognyan}\
  \bibnamefont {Oreshkov}},\ }\bibfield  {title} {\enquote {\bibinfo {title}
  {General conditions for approximate quantum error correction and near-optimal
  recovery channels},}\ }\href {\doibase 10.1103/PhysRevLett.104.120501}
  {\bibfield  {journal} {\bibinfo  {journal} {Physical Review Letters}\
  }\textbf {\bibinfo {volume} {104}},\ \bibinfo {pages} {120501} (\bibinfo
  {year} {2010})}\BibitemShut {NoStop}%
\bibitem [{\citenamefont {Goto}(2016)}]{Goto2016}%
  \BibitemOpen
  \bibfield  {author} {\bibinfo {author} {\bibfnamefont {Hayato}\ \bibnamefont
  {Goto}},\ }\bibfield  {title} {\enquote {\bibinfo {title} {Minimizing
  resource overheads for fault-tolerant preparation of encoded states of the
  {Steane} code},}\ }\href {\doibase 10.1038/srep19578} {\bibfield  {journal}
  {\bibinfo  {journal} {Scientific reports}\ }\textbf {\bibinfo {volume} {6}}
  (\bibinfo {year} {2016}),\ 10.1038/srep19578}\BibitemShut {NoStop}%
\bibitem [{\citenamefont {Yoder}\ and\ \citenamefont {Kim}(2017)}]{Yoder2017}%
  \BibitemOpen
  \bibfield  {author} {\bibinfo {author} {\bibfnamefont {Theodore~J}\
  \bibnamefont {Yoder}}\ and\ \bibinfo {author} {\bibfnamefont {Isaac~H}\
  \bibnamefont {Kim}},\ }\bibfield  {title} {\enquote {\bibinfo {title} {The
  surface code with a twist},}\ }\href {\doibase 10.22331/q-2017-04-25-2}
  {\bibfield  {journal} {\bibinfo  {journal} {Quantum}\ }\textbf {\bibinfo
  {volume} {1}},\ \bibinfo {pages} {2} (\bibinfo {year} {2017})}\BibitemShut
  {NoStop}%
\bibitem [{\citenamefont {Gottesman}(1998)}]{Gottesman1998}%
  \BibitemOpen
  \bibfield  {author} {\bibinfo {author} {\bibfnamefont {D}~\bibnamefont
  {Gottesman}},\ }\href@noop {} {\emph {\bibinfo {title} {The Heisenberg
  representation of quantum computers}}},\ \bibinfo {type} {Tech. Rep.}\
  (\bibinfo  {institution} {Los Alamos National Lab.},\ \bibinfo {year}
  {1998})\BibitemShut {NoStop}%
\end{thebibliography}%

\appendix
\section{Computational universality without single-qubit gates}\label{no_1qubit_univ}
Most universal gate sets for quantum computation include some single-qubit gates. Indeed, this can be an efficient choice for implementing some algorithms because fault-tolerant versions of those gates are likely small, and, moreover, much more is known about compiling with single-qubit gates. However, single-qubit gates are not a necessary part of a computationally universal gate set, while multi-qubit gates are.

\begin{thm}
Assuming the availability of $\ket{0},\ket{+}$ and measurement in the $X$- and $Z$-bases, CCZ is quantum computationally universal.
\end{thm}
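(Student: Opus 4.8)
The plan is to reduce the statement to the known universality of the pair $\{\text{CCZ},H\}$. Since $\text{Toffoli}=(I\otimes I\otimes H)\,\text{CCZ}\,(I\otimes I\otimes H)$, the set $\{\text{CCZ},H\}$ generates the same gates as $\{\text{Toffoli},H\}$, and the latter is quantum universal \cite{Aharonov2003}. Thus it suffices to manufacture a single Hadamard out of CCZ together with the stated preparations and measurements; every gate of a universal computation is then a product of CCZ and $H$.

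First I would prepare a reusable ``catalyst'' qubit in $\ket{1}$. Measuring a freshly prepared $\ket{+}$ in the $Z$ basis returns $\ket{1}$ with probability $1/2$, heralded by the outcome, so repeating with new $\ket{+}$ ancillas until the outcome is $1$ prepares $\ket{1}$ with certainty. Because CCZ is diagonal in the computational basis, a control held in $\ket{1}$ is returned unchanged, so the catalyst is never consumed and can be reused indefinitely. Fixing it in $\ket{1}$, the gate $\text{CCZ}(\mathrm{cat},a,b)$ acts as $\text{CZ}(a,b)$, and with two such catalysts $\text{CCZ}(\mathrm{cat}_1,\mathrm{cat}_2,a)$ acts as the Pauli $Z_a$. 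Hence CZ and $Z$ are available deterministically.

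Next I would obtain $H$ by one-bit teleportation \cite{Zhou2000}, exactly the protocol described in the main text but at the physical level. Preparing an ancilla in $\ket{+}$, applying $\text{CZ}$ between the data qubit and the ancilla, and measuring the data qubit in the $X$ basis leaves the ancilla in $X^{x}H\ket{\psi}$, where $x\in\{0,1\}$ is the outcome; that is, $H$ is realized up to a Pauli byproduct recorded classically. With $H$ and $Z$ in hand we also have $X=HZH$ and $\CNOT=(I\otimes H)\,\text{CZ}\,(I\otimes H)$, so all of the Pauli--CZ--$\CNOT$ gates can be implemented.

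The hard part is byproduct bookkeeping across the non-Clifford gate, and this is the step demanding care. A Pauli byproduct does not simply commute through CCZ: one checks $\text{CCZ}\,X_1\,\text{CCZ}=X_1\,\text{CZ}_{23}$, so pushing a Pauli past CCZ spawns a two-qubit CZ correction. The key observation is that the spawned corrections never escape an implementable, trackable group. Since CZ is diagonal it commutes with CCZ, and conjugating a CZ or $\CNOT$ by a Hadamard or by CCZ again returns only Paulis, CZs, and $\CNOT$s (for instance $\text{CCZ}\,\CNOT_{12}\,\text{CCZ}=\text{CZ}_{13}\,\CNOT_{12}$). Thus the accumulated byproduct is always a product of Paulis, CZs, and $\CNOT$s, a subgroup of the Clifford group that I can both track in a classical frame \`a la Gottesman--Knill and, whenever a physical correction is actually needed, discharge using the gates built above. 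Commuting this classically known correction to the end of the computation and adapting the final $X$/$Z$ measurement bases accordingly, each gate of a $\{\text{CCZ},H\}$ circuit is applied deterministically up to a tracked Clifford byproduct, and universality follows from \cite{Aharonov2003}.
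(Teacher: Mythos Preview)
Your overall strategy matches the paper's: reduce to $\{H,\text{CCZ}\}$, obtain CZ from CCZ via a $\ket{1}$ catalyst prepared by measuring $\ket{+}$ in the $Z$ basis, and realize $H$ by one-bit teleportation. The divergence is in how you handle the teleportation byproducts, and there your argument has a genuine gap. You claim the accumulated byproduct always lies in $\langle\text{Pauli},\text{CZ},\CNOT\rangle$ and that this group is preserved under conjugation by $H$ and CCZ. It is not: every element of $\langle\text{Pauli},\text{CZ},\CNOT\rangle$ is a monomial matrix (a permutation times a diagonal phase), but conjugating a $\CNOT$ byproduct by $H$ on its \emph{control} qubit yields $G=H_c\,\CNOT\,H_c$ with $G\ket{00}=\tfrac{1}{2}(\ket{00}+\ket{01}+\ket{10}-\ket{11})$, which is not monomial. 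Such a $G$ does arise in your scheme: an $X$ byproduct through CCZ spawns a CZ; an $H$ on one leg turns that CZ into a $\CNOT$; a subsequent $H$ in the circuit may well land on its control. Worse, once $G$ is in the frame, a further CCZ conjugation can push the byproduct outside the Clifford group altogether --- for instance $\text{CCZ}_{345}\,G_{23}\,\text{CCZ}_{345}$ sends $X_4$ to $X_4\,\CNOT_{52}$ under conjugation --- so your appeal to Gottesman--Knill tracking fails.

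The paper avoids this by a simpler maneuver that keeps the frame purely Pauli throughout. Whenever a pending $X_i$ byproduct reaches a $\text{CCZ}_{ijk}$, use $\text{CCZ}\,X_i=X_i\,\text{CZ}_{jk}\,\text{CCZ}$ and apply the spawned $\text{CZ}_{jk}$ \emph{physically} right then, via CCZ and the $\ket{1}$ catalyst. This costs a bounded number of extra CZ gates per CCZ --- a polynomial blowup in circuit size --- and the frame never leaves the Pauli group, for which tracking and absorption into the final $X$/$Z$ measurements is immediate.
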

\begin{proof}
We start from the computationally universal set $\{H,CCZ\}$ \cite{Aharonov2003}, then implement $H$ using the circuit described in the main text (also see Fig.~\ref{tele_h}). This involves two gates not explicitly in our gate set. First, CZ can be implemented using CCZ controlled on a $\ket{1}$ ancilla. We can prepare $\ket{1}$ nondeterministically by measuring $\ket{+}$ in the $Z$-basis, succeeding with probability $1/2$. Second, $X$ might have to be applied via classical control. However, we can track this Pauli through the subsequent circuit. At most, it results in needing an additional CZ gate for every CCZ, a polynomial blowup in circuit size. We note that both complications are more easily remedied, in particular without any nondeterminism, by adding $X$ to the gate set, which is transversal for any stabilizer code.
\end{proof}

\begin{figure}
\includegraphics[width=0.9\columnwidth]{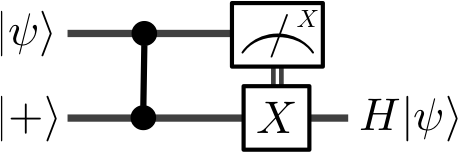}
\caption{\label{tele_h} Implementing $H$ by one-bit teleportation \cite{Zhou2000}.}
\end{figure}

Thus, any stabilizer code that can implement CCZ has fault-tolerant computational universality. Codes with transversal CCZ do not violate the Eastin-Knill no-go \cite{Eastin2009} because the injection of $H$ is not completely unitary. Interestingly, if transversal CZ is available in a CSS (Calderbank-Shor-Steane) stabilizer code, the injection of $H$ in Fig.~\ref{tele_h} uses no more resources than Steane error-correction would, because $Z$ errors are correctable using information from the transversal $X$-measurement. This is perhaps an even easier route to universality with the 15-qubit Reed-Muller code (and the larger quantum Reed-Muller family) than gauge-fixing \'{a} la \cite{Paetznick2013,Anderson2014}.

Finally, we note Toffoli is quantum universal on its own given the same access to preparation and measurement. Since classical reversible computation is universal with just Toffoli, this reveals the power of quantum computation as ``simply'' the ability to prepare and measure in the conjugate basis.

\section{Bravyi-K\"{o}nig for Bacon-Shor codes}\label{BK-BS}
In this section, we view the 2D Bacon-Shor codes as a topological family to better understand the limitations of logical gates. Bravyi-K\"{o}nig \cite{Bravyi2013} made general arguments restricting the ability of logical operators for stabilizer codes with local generators in $D$ spatial dimensions. Subsequently, Pastawski-Yoshida \cite{Pastawski2015} made similar arguments for subsystem codes that possess a threshold. Unfortunately, neither of these results directly apply to the Bacon-Shor CCZ gates that we have developed here. Our gates work only in the fixed $Z$-gauge, and therefore the $X$-stabilizers fail to satisfy the spatial locality constraint of Bravyi-K\"{o}nig. The Bacon-Shor code family also notoriously fails to have a threshold, and so fails to satisfy the assumptions of Pastawski-Yoshida.

Nevertheless, following the simpler argument of Bravyi-K\"{o}nig for the restriction of logical gates on the 2D surface code, we can develop the following theorem for 2D, $Z$-gauge Bacon-Shor codes
\begin{thm}
Consider a constant-depth circuit $U$ that is a logical operator on a constant number of copies a 2D, $Z$-gauge Bacon-Shor code with distance $d$. Then, $U$ is a Clifford operation as long as the gates in $U$ have ranges in the $x$- and $y$-dimensions $R_x$ and $R_y$, respectively, satisfying $(R_x+1)(R_y+1)< O(d)$.
\end{thm}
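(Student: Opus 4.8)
The plan is to follow the simpler Bravyi-K\"onig cleaning argument for the 2D surface code, replacing the isotropic range bound by the anisotropic, area-type bound forced by the Bacon-Shor logical structure. Work throughout in the fixed $Z$-gauge, where $\bar Z$ is a column of $Z$'s that may be deformed horizontally (each $\bar Z_{i,j}$ is now a stabilizer) while $\bar X$ is a rigid full row free to slide vertically. Since $U$ preserves the codespace it already normalizes the stabilizer group; to show it is Clifford it suffices to show that it conjugates the logical Pauli group of the constant number of code copies into itself. The first ingredient is a light-cone lemma: because $U$ has constant depth and each gate couples qubits within range $R_x$ in $x$ and $R_y$ in $y$, the backward light cone of any single qubit lies in a rectangle of dimensions $O(R_x)\times O(R_y)$, the constant coming from the depth. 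Hence for any single-qubit Pauli $P_q$ the operator $U P_q U^\dagger$ is supported on a rectangle of area $O\bigl((R_x+1)(R_y+1)\bigr)$ about $q$.

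The second ingredient is the correctability criterion peculiar to Bacon-Shor. A region can carry a nontrivial logical only if it spans all $m$ rows (to support a deformed $\bar Z$) or contains a full row of $n$ qubits (to support $\bar X$); either way it must contain at least $\min(m,n)=d$ qubits. Therefore any region of fewer than $d$ qubits is correctable: it supports no logical and can be cleaned away with the stabilizer and gauge group. Combined with the light-cone lemma, the hypothesis $(R_x+1)(R_y+1)<O(d)$ (the constant absorbing the depth) guarantees that the conjugate $U P_q U^\dagger$ of every single-qubit Pauli is supported on a correctable region.

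The crux is to convert this local correctability into the global statement that $U$ is Clifford. Here I would invoke the cleaning lemma together with the Bravyi-K\"onig disentangling lemma: partition the lattice into correctable tiles of area $O((R_x+1)(R_y+1))$, and use that, tile by tile, the constant-depth circuit factorizes up to a local correction supported on a correctable overlap. Iterating across the constant number of tiles shows that the induced logical map decomposes into local pieces supported on correctable footprints; since a correctable footprint supports no nontrivial logical operator, each piece can contribute only a logical Pauli (or a stabilizer/gauge) factor, forcing $U$ into the logical Clifford group. The constant number of copies is handled by stacking the lattices: a logical of the stacked code still lives within one block's row or column span, so the area-$\ge d$ correctability survives, and the same commutation bookkeeping across blocks rules out inter-block non-Clifford gates such as a logical $\overline{\text{CCZ}}$.

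The main obstacle is making this last step rigorous in the fixed-gauge subsystem setting. As noted in the main text, fixing the $Z$-gauge destroys the spatial locality of the $X$-stabilizers, so Bravyi-K\"onig cannot be quoted as a black box; I must re-derive the cleaning and disentangling lemmas directly from the explicit generators $\bar Z_{i,j}$ and $\bar X_{h,k}$, and verify that an operator all of whose single-qubit footprints are individually cleanable is globally a logical Pauli times a gauge element. That the area bound is the right quantity, rather than separate bounds on $R_x$ and $R_y$, is confirmed by a tightness check: the transversal $\overline{\text{CCZ}}$ on the $m\times m^2$ code is non-Clifford, and its column-aligned gates give a single-qubit footprint of shape $1\times m$, of area $\sim m\sim d$, which just saturates $(R_x+1)(R_y+1)<O(d)$.
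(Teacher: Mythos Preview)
Your ingredients are right---the light-cone bound and the area-$<d$ correctability criterion---but the mechanism you propose for combining them does not work, and the paper's proof uses a different key step that you are missing.

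The gap is in your third paragraph. You want to tile the lattice, factorize $U$ into pieces on correctable tiles, and argue that each piece ``can contribute only a logical Pauli (or a stabilizer/gauge) factor.'' This fails on two counts. First, the restrictions of a constant-depth circuit to tiles do not individually preserve the codespace, so their ``logical action'' is not defined and there is nothing to clean. Second, even when a unitary \emph{is} a logical operator supported on a correctable region, it acts as logical \emph{identity} (up to phase), not as a general logical Pauli; taken at face value your argument would prove $U$ is trivial, which is false since transversal $\overline{\mathrm{CNOT}}$ is a legitimate constant-depth $U$. The Bravyi--K\"onig disentangling lemma is not a decomposition of $U$ into tile-local logicals, and no amount of re-deriving cleaning for the $Z$-gauge generators will repair this.

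The paper avoids decomposing $U$ altogether. It places two arbitrary logical Paulis $P$ and $Q$ on crossed strip-pairs $\alpha_1=\xi_1\cup\zeta_1$ and $\alpha_2=\xi_2\cup\zeta_2$ separated by $O(d)$, and forms the group commutator
\[
K=P\,(UQU^\dagger)\,P\,(UQU^\dagger).
\]
Since $U$ is logical, $K$ is a logical operator. Two light-cone passes confine $K$ to the \emph{intersection} $\beta_1\cap\beta_2$ of the thickened strips, a region of size $O\bigl((R_x+1)(R_y+1)\bigr)$; under the hypothesis this is $<O(d)$, so $K$ is the trivial logical, $K\Pi=\pm\Pi$. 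That is exactly the statement that $P$ and $UQU^\dagger$ commute or anticommute on the codespace for all logical $P,Q$, i.e.\ $U$ is logical Clifford. The group commutator is the missing idea: it localizes the obstruction to a small region without ever breaking $U$ into pieces, and it is what you should replace your tiling step with.
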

We note that our CCZ constructions on $m\times m^2$ codes saturate the bound, because $R_y=m-1=d-1$ and $R_x=0$. Indeed, even our $\text{C}^k\text{Z}$ constructions on $m\times m^k$ codes saturate the bound for all $k$.
\begin{proof}
Consider a stack of a constant number of 2D, $Z$-gauge Bacon-Shor codes depicted in Fig.~\ref{bacon_topo}. Denote pairs of horizontal and vertical regions of qubits by $\xi_j$ and $\zeta_j$ with $j=1,2$. These regions can be chosen to have constant width and be separated by $O(d)$ qubits. Consider two logical Pauli operators $P$ and $Q$ acting on all codeblocks. Without loss of generality we may take $P$ to lie within $\alpha_1:=\xi_1\cup\zeta_1$ and $Q$ to lie within $\alpha_2:=\xi_2\cup\zeta_2$.

\begin{figure}
\includegraphics[width=0.75\columnwidth]{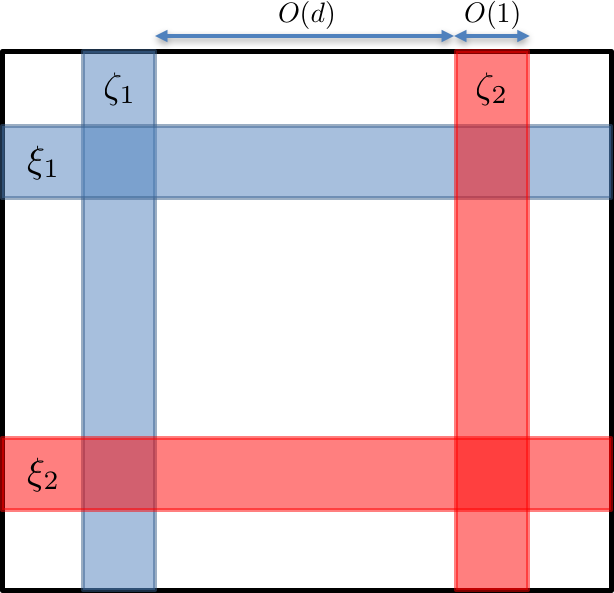}
\caption{\label{bacon_topo} A stack of a constant number (e.g.~3) of 2D, $Z$-gauge Bacon-Shor codes can be visualized in 2D.}
\end{figure}

Write the group commutator
\begin{equation}
K=P(UQU^\dag)P(UQU^\dag).
\end{equation}
It is clear that $K$ is a logical operator since $U$ is. Assume for now (to be shown later) that $K$ is a trivial logical operator, i.e.~
\begin{equation}\label{K_is_logical}
K\Pi=c\Pi,
\end{equation}
where $\Pi$ is the projector onto the codespace and $c$ is a constant. Writing $PK\Pi=cP\Pi$ and squaring it, we get
\begin{equation}
(PK)^2\Pi=c^2P^2\Pi,
\end{equation}
where commutation of all logical operators with $\Pi$ was used. Since $P^2=Q^2=1$ as they are Pauli operators, we get $c=\pm1$. Thus, Eq.~\eqref{K_is_logical} becomes
\begin{equation}
P(UQU^\dag)\Pi=\pm(UQU^\dag)P\Pi,
\end{equation}
representing commutation or anticommutation of $P$ and $UQU^\dag$ with respect to the codespace. This holds for all $P$ and $Q$, implying $UQU^\dag$ is a logical Pauli operator for all $Q$, and thus $U$ is a logical Clifford operator by definition.

It remains to show that Eq.~\eqref{K_is_logical} holds, and it is here we use the bound on the range of gates in $U$. Let $\mathcal{B}(A,r)$ denote the set of qubits within distance $r$ of a set of qubits $A$. If $U$ has depth $h$ and consists of gates with range $R_x$ and $R_y$ in the $x$- and $y$-dimensions, then $V:=UQU^\dag\in\mathcal{B}(\xi_2,hR_y)\cap\mathcal{B}(\zeta_2,hR_x):=\beta_2$. Gates of $U$ outside the ``lightcone'' of $\alpha_2$ act trivially on $Q$ and cancel, leading to $V$ supported in a limited area. Now $V$ is a depth $(2h+1)$ circuit of bounded range gates, and so $K=PVPV^\dag$ is similarly supported only in the region $\mathcal{B}(\xi_1,(2h+1)R_y)\cap\mathcal{B}(\zeta_1,(2h+1)R_x):=\beta_1$. Yet, at the same time, because $V\in\beta_2$, we have $K\in\beta_2$. Thus, $K\in\beta_1\cap\beta_2$, a region which has size upper bounded by 
\begin{equation}
|\beta_1\cap\beta_2|<O(h^2R_xR_y+hR_x+hR_y).
\end{equation}
If $h^2R_xR_y+hR_x+hR_y<O(d)$, then $K$ can only be the trivial logical operator and so Eq.~\eqref{K_is_logical} is proved. In the case that $U$ has constant depth $h=O(1)$, then $(R_x+1)(R_y+1)<O(d)$ is sufficient for $K$ to be trivial.
\end{proof}

We note briefly that 3D Bacon-Shor codes \cite{Bacon2006} in the $Z$-gauge can perform $\overline{\text{CCZ}}$ without ranged gates, as long as the three codeblocks are allowed to be oriented differently. Symmetric 3D Bacon-Shor codes $m\times m\times m$ are constructed from $m$ $m\times m$ planes of the 2D Ising model with nearest neighbor $ZZ$ interactions ($XX$ interactions exist between adjacent planes). Let $\hat n_j$ be the vector perpendicular to all these $m$ planes for codeblock $j$. Taking $\hat n_1=\hat x$, $\hat n_2=\hat y$, and $\hat n_3=\hat z$ is sufficient for the layout to support $\overline{\text{CCZ}}$ without ranged gates. The generalization to local $\overline{\text{C}^k\text{Z}}$ gates in $k+1$-dimensional Bacon-Shor codes is straightforward (including $k=1$).


\section{The conditions for Pauli recovery and fault-tolerant circuits that violate them}\label{pauli_cond}
In this section, our goal is to decide when circuits of CCZ gates between Bacon-Shor codeblocks are fault-tolerant, and, in particular, if they are able to be made fault-tolerant by using only the ``standard'' stabilizer code recovery: a projective measurement of all stabilizers followed by a classically controlled recovery chosen from the Pauli group $\mathcal{P}$. Actually, we define two special-case recovery procedures -- stabilizer projective recovery (SPR) and stabilizer projective Pauli recovery (SPPR) -- the latter (a subset of the former) corresponding to the stabilizer code ``standard''. We then elucidate when SPR and SPPR exist for a given error channel on a stabilizer code. Next, we define CCZ-form circuits and prove a simpler characterization of the existence of SPR for them. Then, we bound the asymmetry of Bacon-Shor codes required such that CCZ-form circuits implementing $\overline{\text{CCZ}}$ are fault-tolerant using only SPR. Finally, we discuss the fault-tolerance (using non-Pauli recovery) of $\overline{\text{CCZ}}$ designs on Bacon-Shor codes.
Interestingly, while our recovery is not SPR, it still borrows most of its circuitry from that class of recovery.

We begin by formally defining SPR and SPPR.
\begin{defn}
For a stabilizer code with generators $\{s_k\}_{k\in[r)}$ define the set of orthogonal projectors indexed by $\alpha\in\{0,1\}^r$ as
\begin{equation}
P_{\alpha}=\prod_{k=0}^{r-1}\frac{I+(-1)^{\alpha_k}s_k}{2}.
\end{equation}
Then a stabilizer projective recovery (SPR) $\mathcal{R}$ is defined
\begin{equation}
\mathcal{R}(\sigma)=\sum_{\alpha\in\{0,1\}^r}\mathcal{R}_{\alpha}\left(P_\alpha\sigma P_\alpha\right).
\end{equation}
for arbitrary quantum channels $\mathcal{R}_\alpha$. If $\mathcal{R}_\alpha(\rho)=U_\alpha\rho U_\alpha^\dag$ with $U_\alpha\in\mathcal{P}$ for all $\rho$ and all $\alpha$, then the recovery is stabilizer projective with Pauli recovery (SPPR).
\end{defn}
That is, both SPR and SPPR assume that a complete set of stabilizers is measured, followed by a classically controlled channel. In SPPR this channel is simply a Pauli operator. Stabilizer codes and logical gates on stabilizer codes traditionally use SPPR, as this is sufficient for fault-tolerance of transversal operations.

However, SPR and SPPR is inherently weaker than the entire class of recovery allowed by the Knill-Laflamme conditions.
\begin{thm}[Knill-Laflamme \cite{Knill1997}]\label{KL}
Consider a quantum code with projector $P$ and quantum operation $\mathcal{E}$ with Kraus operators $\{E_j\}$. There exists a recovery operation $\mathcal{R}$ correcting $\mathcal{E}$ (i.e.~$\mathcal{R}(\mathcal{E}(\rho))\propto\rho$ for all $\rho$ in the codespace) if and only if
\begin{equation}\label{PEEP}
PE_i^\dag E_jP=\gamma_{ij}P,
\end{equation}
for a Hermitian matrix $\gamma$.
\end{thm}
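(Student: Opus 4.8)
The plan is to prove the two implications of the biconditional separately, working throughout with a Kraus representation $\{R_k\}$ of the recovery $\mathcal{R}$ normalized so that $\sum_k R_k^\dagger R_k = I$. For the necessity direction ($\Rightarrow$), I would begin from the hypothesis that the composite map $\mathcal{R}\circ\mathcal{E}$ acts as a fixed multiple of the identity on codewords, $\mathcal{R}(\mathcal{E}(\rho))=\lambda\rho$ for every $\rho=P\rho P$. The key structural fact is that a channel which equals (a multiple of) the identity on the codespace admits the single-element Kraus set $\{\sqrt{\lambda}\,P\}$ there; since any two Kraus representations of the same channel are related by an isometry, each composite operator must satisfy $R_kE_jP=c_{kj}P$ for scalars $c_{kj}$. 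I would then compute $PE_i^\dagger R_k^\dagger R_kE_jP=\overline{c_{ki}}\,c_{kj}\,P$, sum over $k$, and invoke completeness $\sum_kR_k^\dagger R_k=I$ to obtain $PE_i^\dagger E_jP=\gamma_{ij}P$ with $\gamma_{ij}=\sum_k\overline{c_{ki}}\,c_{kj}$, which is manifestly Hermitian (indeed positive semidefinite).

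For the sufficiency direction ($\Leftarrow$), which is the substantive part, I would construct the recovery explicitly. Given $PE_i^\dagger E_jP=\gamma_{ij}P$ with $\gamma$ Hermitian, I first exploit the unitary freedom in the Kraus representation of $\mathcal{E}$: diagonalizing $\gamma=udu^\dagger$ and passing to $F_k=\sum_j(u^\dagger)_{kj}E_j$ yields the same channel but with $PF_k^\dagger F_lP=d_k\delta_{kl}P$. For each $k$ with $d_k>0$, the diagonal relation $PF_k^\dagger F_kP=d_kP$ says, via the polar decomposition of $F_kP$, that $F_k$ acts on the codespace as $\sqrt{d_k}$ times a partial isometry $W_k$ carrying the code onto a subspace $V_k$, while the off-diagonal relations $PF_k^\dagger F_lP=0$ force the $V_k$ to be mutually orthogonal. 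I would then define the recovery to project onto each error subspace and undo the corresponding isometry, taking $R_k=W_k^\dagger\Pi_{V_k}$ with $\Pi_{V_k}$ the projector onto $V_k$, and verify by direct substitution that $\mathcal{R}(\mathcal{E}(\rho))=(\sum_k d_k)\,\rho$ for all codewords, which is exactly the asserted proportionality.

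The main obstacle I anticipate is in this sufficiency construction: organizing the orthogonal error subspaces $V_k$ correctly and handling the degenerate cases (eigenvalues $d_k=0$, or the $V_k$ failing to span the full Hilbert space) so that the operators $\{R_k\}$ assemble into a legitimate trace-nonincreasing map. The standard remedy is to append one extra Kraus operator supported on the orthogonal complement of $\bigcup_k V_k$, which completes $\mathcal{R}$ to a genuine channel without affecting its action on the code. A secondary subtlety, present in both directions, is that the theorem demands only proportionality rather than equality, so I must carry the normalization constant $\sum_k d_k$ (equivalently $\tr\gamma$) through the argument rather than silently assuming the composite map is trace-preserving on the code.
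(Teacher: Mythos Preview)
The paper does not actually prove this theorem; it states it and immediately writes ``For the proof of this famous theorem, we refer to \cite{Knill1997,Nielsen2000}.'' Your proposal is correct and is essentially the standard Nielsen--Chuang argument that the paper is citing: diagonalize $\gamma$ to pass to Kraus operators $F_k$ with $PF_k^\dagger F_lP=d_k\delta_{kl}P$, use the polar decomposition to extract isometries onto mutually orthogonal error subspaces, and build $\mathcal{R}$ from the inverse isometries (completing with an extra Kraus operator on the leftover subspace). Your necessity argument via the isometric freedom of Kraus representations is also standard and correct. Notably, the paper reproduces exactly this machinery in its proof of the SPR characterization (Theorem~\ref{SPR}), so your approach matches both the cited references and the paper's own style for the related results.
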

For the proof of this famous theorem, we refer to \cite{Knill1997,Nielsen2000}. We note here a subtlety, however. In general, a trace-preserving error channel $\mathcal{E}$ is never exactly correctable, because it can conceivably involve catastrophic yet very low probability events. Indeed, it is more reasonable that Eq.~\eqref{PEEP} is merely approximately satisfied. To make this concrete, introduce an error parameter $\epsilon$ (e.g.~this could be the depolarizing error rate), and demand that
\begin{equation}\label{approxPEEP}
PE_i^\dag(\epsilon) E_j(\epsilon)P=\gamma_{ij}(\epsilon)P+O(\epsilon^d).
\end{equation}
We say the code has effective distance $d$ (with respect to the error channel $\mathcal{E}$) if this approximate condition holds. In \cite{Leung1997} it is argued that Eq.~\eqref{approxPEEP} is sufficient for the existence of a recovery operation $\mathcal{R}$ such that the fidelity $\mathcal{F}$ is bounded like
\begin{equation}
\mathcal{F}:=\min_{\bar\psi\in\mathcal{C}}\bra{\bar\psi}\left(\mathcal{R}\circ\mathcal{E}(\ket{\bar\psi}\bra{\bar\psi})\right)\ket{\bar\psi}\ge1-O(\epsilon^d).
\end{equation}
The minimization is over all states in the codespace $\mathcal{C}$. 

As an example, single-qubit depolarizing noise over all $n$ qubits in a code has Kraus operators 
\begin{equation}
\left\{\epsilon^{|\vec v|}\sigma^{v_1}_1\otimes\sigma^{v_2}_2\dots\otimes\sigma^{v_n}_n:\vec v\in\{0,1,2,3\}^n\right\},
\end{equation}
where subscripts indicate the affected qubit and superscripts the Pauli operator. The Hamming weight $|\vec v|$ is defined as the number of non-zero elements of $\vec v$. In this case, effective distance $d$ implies error-correction of all errors of weight $<d$, as per the definition of code distance. Later, we consider the effective distance of circuit noise channels, those in which each Kraus operator is a collection of circuit faults that have propagated to the end of the circuit.

Generalizing similar ideas to entanglement fidelity has yielded both necessary \emph{and} sufficient conditions for approximate correction \cite{Beny2010}. While these ideas are likely applicable here, for simplicity we stick with the notion of approximate correction in Eq.~\eqref{approxPEEP}.

Our immediate goal, however, is to develop conditions analogous to Knill-Laflamme for performing SPR.
\begin{thm}\label{SPR}
Given a stabilizer code with projector $P$ and a quantum operation $\mathcal{E}$ with Kraus operators $\{E_j\}$, there exists a SPR $\mathcal{R}$ correcting $\mathcal{E}$ if and only if there exist hermitian matrices $\gamma^{(\alpha)}$ such that for all $\alpha,i,j$
\begin{align}\label{PEPEP}
PE_i^\dag P_\alpha E_jP=\gamma^{(\alpha)}_{ij}P.
\end{align}
\end{thm}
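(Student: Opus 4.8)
The plan is to read an SPR as ``measure the full syndrome, then recover conditioned on the outcome,'' and thereby reduce the claim to a separate application of the Knill--Laflamme theorem (Theorem~\ref{KL}) in each of the $2^r$ syndrome sectors. Concretely, for each $\alpha$ I introduce the \emph{projected operation} $\mathcal{E}_\alpha$ with Kraus operators $\{P_\alpha E_j\}_j$, so that on any codeword $\rho=P\rho P$ one has $P_\alpha\mathcal{E}(\rho)P_\alpha=\mathcal{E}_\alpha(\rho)=\sum_j P_\alpha E_j\rho E_j^\dagger P_\alpha$. By the definition of SPR this gives $\mathcal{R}(\mathcal{E}(\rho))=\sum_\alpha\mathcal{R}_\alpha(\mathcal{E}_\alpha(\rho))$. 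The payoff of this reformulation is that the Knill--Laflamme matrix of the operation $\mathcal{E}_\alpha$ is exactly the left-hand side of Eq.~\eqref{PEPEP}, since $P(P_\alpha E_i)^\dagger(P_\alpha E_j)P=PE_i^\dagger P_\alpha E_jP$. Hence the existence of a sector recovery $\mathcal{R}_\alpha$ returning $\mathcal{E}_\alpha$ to the codespace is, by Theorem~\ref{KL}, equivalent to the existence of a Hermitian $\gamma^{(\alpha)}$ obeying Eq.~\eqref{PEPEP} for that fixed $\alpha$. I would note that such a $\gamma^{(\alpha)}$ is automatically positive semidefinite, because $\sum_{ij}\bar c_i\gamma^{(\alpha)}_{ij}c_jP=P(\sum_i c_iP_\alpha E_i)^\dagger(\sum_j c_jP_\alpha E_j)P\succeq0$, so Theorem~\ref{KL} genuinely applies.

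For the ``if'' direction I assume Eq.~\eqref{PEPEP} and, for each $\alpha$, invoke Theorem~\ref{KL} to produce a trace-preserving recovery channel $\mathcal{R}_\alpha$ with $\mathcal{R}_\alpha(\mathcal{E}_\alpha(\rho))=q_\alpha\rho$ on codewords. The constant $q_\alpha$ is pinned down by taking traces: since $\mathcal{R}_\alpha$ preserves trace, $q_\alpha\tr\rho=\tr\mathcal{E}_\alpha(\rho)=\tr(P_\alpha\mathcal{E}(\rho))$, so that $\sum_\alpha q_\alpha\tr\rho=\tr((\sum_\alpha P_\alpha)\mathcal{E}(\rho))=\tr\mathcal{E}(\rho)$, a fixed multiple $c$ of $\tr\rho$ on the codespace. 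Assembling $\mathcal{R}(\sigma)=\sum_\alpha\mathcal{R}_\alpha(P_\alpha\sigma P_\alpha)$ then yields a bona fide SPR with $\mathcal{R}(\mathcal{E}(\rho))=\sum_\alpha q_\alpha\rho=c\rho\propto\rho$, as required.

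The ``only if'' direction is where the real work lies, and I expect it to be the main obstacle. Here I assume some SPR corrects $\mathcal{E}$, so the completely positive maps $\mathcal{N}_\alpha:=\mathcal{R}_\alpha\circ\mathcal{E}_\alpha$ satisfy $\sum_\alpha\mathcal{N}_\alpha=c\,\mathrm{id}$ as channels on the codespace. I cannot simply assert that each $\mathcal{N}_\alpha$ is individually proportional to the identity, and establishing exactly this is the crux. My plan is to argue it at the level of Choi matrices: the identity channel on the codespace has rank-one Choi matrix, the projector onto the canonical maximally entangled state $\kket{\Phi}$, while each Choi matrix $J(\mathcal{N}_\alpha)\succeq0$ by complete positivity; a sum of positive semidefinite matrices equal to a rank-one positive matrix forces every summand to have range inside that one-dimensional range, i.e.~$J(\mathcal{N}_\alpha)=p_\alpha\kketbbra{\Phi}{\Phi}$. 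Translating back gives $\mathcal{R}_\alpha(\mathcal{E}_\alpha(\rho))=p_\alpha\rho$ for every codeword, so each $\mathcal{R}_\alpha$ corrects $\mathcal{E}_\alpha$; the degenerate case $p_\alpha=0$ forces $\tr\mathcal{E}_\alpha(\rho)=0$ and hence $P_\alpha E_jP=0$ for all $j$, which is handled by taking $\gamma^{(\alpha)}=0$. The ``only if'' half of Theorem~\ref{KL}, applied sector-by-sector, then delivers the Hermitian matrices $\gamma^{(\alpha)}$ satisfying Eq.~\eqref{PEPEP}, completing the proof.
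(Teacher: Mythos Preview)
Your proof is correct and, interestingly, takes a route that the paper explicitly contemplates but declines to follow. The paper remarks that one could ``consider $P_\alpha E_j$ to be the operator elements of a new error channel and apply Theorem~\ref{KL},'' yet worries that the resulting recovery need not be SPR, and so reruns the full Knill--Laflamme construction (diagonalizing each $\gamma^{(\alpha)}$, polar decomposition, building orthogonal syndrome projectors $P_k^{(\alpha)}$) from scratch inside the proof. You sidestep that concern by applying Theorem~\ref{KL} sector-by-sector to the separate operations $\mathcal{E}_\alpha$ rather than once to the combined channel $\{P_\alpha E_j\}_{\alpha,j}$; the individual $\mathcal{R}_\alpha$ then assemble into an SPR by construction. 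For the forward direction, the paper argues via the unitary freedom of Kraus representations to obtain $R_l^{(\alpha)}P_\alpha E_kP=c_{lk}^{(\alpha)}P$ directly and then invokes completeness of $\{R_l^{(\alpha)}\}$; your Choi-matrix argument (a sum of positive semidefinite operators equaling a rank-one operator forces every summand into that one-dimensional range) achieves the same decoupling $\mathcal{R}_\alpha\circ\mathcal{E}_\alpha=p_\alpha\,\mathrm{id}$ more structurally, and your handling of the degenerate case $p_\alpha=0$ via trace preservation is clean. The two proofs are of comparable length; yours is more modular and treats Theorem~\ref{KL} as a black box, while the paper's is self-contained and makes the recovery explicit.
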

\begin{proof}
Our arguments closely follow \cite{Nielsen2000}. Indeed, we might consider $P_\alpha E_j$ to be the operator elements of a new error channel and apply Theorem~\ref{KL}. Yet, this does not necessarily guarantee the recovery is SPR as we have defined it, and so we run through the complete proof for definitiveness. Note that because $\sum_\alpha P_\alpha=I$, summing Eq.~\eqref{PEPEP} over $\alpha$ implies Eq.~\eqref{PEEP}; SPR is indeed a subset of Knill-Laflamme recovery.

We prove the reverse direction first. Since each $\gamma^{(\alpha)}$ is hermitian, each can be diagonalized $d^{(\alpha)}=u^{(\alpha)\dag}\gamma^{(\alpha)}u^{(\alpha)}$. Define unitary equivalent Kraus operators $F^{(\alpha)}_k=\sum_ju^{(\alpha)}_{jk}E_j$. That is, $\mathcal{E}^{(\alpha)}$ with Kraus operators $\{F^{(\alpha)}_k\}$ satisfies $\mathcal{E}^{(\alpha)}(\rho)=\mathcal{E}(\rho)$ for all $\rho$ and $\alpha$. Notice furthermore that for all $\alpha$,
\begin{equation}
PF_k^{(\alpha)}P_\alpha F_l^{(\alpha)}P=d_{kl}^{(\alpha)}P,
\end{equation}
by using the definition of $F^{(\alpha)}_k$ and Eq.~\eqref{PEPEP}.

Use the polar decomposition $A=U\sqrt{A^\dag A}$ on $P_\alpha F_l^{(\alpha)}P$. This guarantees the existence of unitaries $U_k^{(\alpha)}$ such that
\begin{equation}
P_\alpha F^{(\alpha)}_kP=\sqrt{d^{(\alpha)}_{kk}}U_k^{(\alpha)}P.
\end{equation}
For any fixed $\alpha$, the projectors $P^{(\alpha)}_k=U^{(\alpha)}_kPU^{(\alpha)\dag}_k$ are orthogonal,
\begin{align}
P^{(\alpha)}_lP^{(\alpha)}_k&=\frac{U^{(\alpha)}_lPF^{(\alpha)\dag}_lP_\alpha F_kPU^{(\alpha)\dag}_k}{\sqrt{d^{(\alpha)}_{ll}d^{(\alpha)}_{kk}}}\\
&=\frac{d_{lk}^{(\alpha)}}{\sqrt{d^{(\alpha)}_{ll}d^{(\alpha)}_{kk}}}U^{(\alpha)}_lPU^{(\alpha)\dag}_k,
\end{align}
which is zero when $l\neq k$. If $\sum_kP_k^{(\alpha)}<I$ we can add another projector to complete the set and define
\begin{equation}
\mathcal{R}_\alpha(\rho)=\sum_kU^{(\alpha)\dag}_kP^{(\alpha)}_k\rho P^{(\alpha)}_kU^{(\alpha)}_k.
\end{equation}
We can now show $\mathcal{R}\left(\mathcal{E}(\rho)\right)\propto\rho$ for any $\rho=P\rho P$ in the codespace.
\begin{align}
\mathcal{R}\left(\mathcal{E}(\rho)\right)&=\sum_{\alpha}\mathcal{R}_\alpha\left(P_\alpha\mathcal{E}(\rho) P_\alpha\right)\\\nonumber
&=\sum_{\alpha}\mathcal{R}_\alpha\left(P_\alpha\mathcal{E}_\alpha(\rho) P_\alpha\right)\\\nonumber
&=\sum_\alpha\sum_{kl}PU^{(\alpha)\dag}_kP_\alpha F^{(\alpha)}_lP\rho PF^{(\alpha)\dag}_l P_\alpha U^{(\alpha)}_kP\\\nonumber
&=\sum_\alpha\sum_{kl}d^{(\alpha)}_{kl}\rho\propto\rho.
\end{align}

For the forward direction, we notice that $\mathcal{R}(\mathcal{E}(P\rho P))$ defines a channel for all $\rho$ (not just $\rho$ in the codespace). By the assumption that $\mathcal{R}$ corrects $\mathcal{E}$, we have
\begin{equation}\label{assumption}
\mathcal{R}(\mathcal{E}(P\rho P))=c P\rho P.
\end{equation}
Linearity guarantees $c$ does not depend on $\rho$. Now, Eq.~\eqref{assumption} holds for all $\rho$ and therefore the Kraus operators of the channel on the left and the channel on the right must be unitary related. This means there are constants $c^{(\alpha)}_{lk}$ such that
\begin{equation}
R_l^{(\alpha)}P_\alpha E_kP=c^{(\alpha)}_{lk}P
\end{equation}
if $\{R^{(\alpha)}_l\}$ are the Kraus operators of $\mathcal{R}_\alpha$. Thus, using the completeness of these Kraus operators,
\begin{align}
\left(\sum_lc_{lj}^{(\alpha)*}c^{(\alpha)}_{lk}\right)P&=PE_j^\dag P_\alpha \left(\sum_lR^{(\alpha)\dag}_lR^{(\alpha)}_l\right)P_\alpha E_kP\\
&=PE_j^\dag P_\alpha E_kP.
\end{align}
Because the parenthesized term on the left is a hermitian matrix, this is what we set out to show.
\end{proof}

Theorem~\ref{SPR} says that SPR works if projection to the codespaces $P_\alpha$ does not destroy the orthogonality of the error operators $E_j$. Analogous to the notion of effective distance defined in Eq.~\eqref{approxPEEP}, we have a notion of effective distance $d$ using SPR when
\begin{equation}\label{approxPEPEP}
PE_i^\dag(\epsilon) P_\alpha E_j(\epsilon)P=\gamma^{(\alpha)}_{ij}(\epsilon)P+O(\epsilon^d).
\end{equation}
An SPR $\mathcal{R}$ exists such that the fidelity is bounded as $\mathcal{F}\ge1-O(\epsilon^d)$ when Eq.~\eqref{approxPEPEP} holds.

We also develop conditions for performing SPPR, the stabilizer code standard.
\begin{thm}\label{SPPR}
Given a stabilizer code with projector $P$ and a quantum operation $\mathcal{E}$ with Kraus operators $\{E_j\}$, there exists an SPPR $\mathcal{R}$ correcting $\mathcal{E}$ if and only if there exist constants $c_{\alpha k}$ and unitaries $U_\alpha\in\mathcal{P}$ such that for all $j,\alpha$,
\begin{align}\label{PFP}
P_\alpha E_jP=c_{\alpha j}U_\alpha P.
\end{align}
\end{thm}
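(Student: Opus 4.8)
The plan is to follow the template of the proof of Theorem~\ref{SPR}, specializing to the case where each $\mathcal{R}_\alpha$ is conjugation by a single Pauli. Since SPPR fixes $\mathcal{R}_\alpha(\rho)=U_\alpha\rho U_\alpha^\dag$ with $U_\alpha\in\mathcal{P}$, there is no diagonalization of $\gamma^{(\alpha)}$ or sum over eigenvalues to carry out; the content of the theorem is precisely that a one-Pauli-per-syndrome recovery exists iff, for each syndrome $\alpha$, all the operators $P_\alpha E_jP$ collapse onto a \emph{single} Pauli $U_\alpha$ restricted to the codespace (with the $j$-dependence living only in the scalar $c_{\alpha j}$). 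I would prove the two directions separately, the reverse being a one-line substitution and the forward relying on the unitary freedom in Kraus decompositions.

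For the reverse direction, I assume \eqref{PFP} and propose the explicit recovery $\mathcal{R}_\alpha(\rho)=U_\alpha^\dag\rho U_\alpha$, which is a legitimate SPPR since $U_\alpha^\dag\in\mathcal{P}$. Substituting $P_\alpha E_jP=c_{\alpha j}U_\alpha P$ into the action of $\mathcal{R}$ on a codespace input $\rho=P\rho P$ would give
\begin{align}
\mathcal{R}(\mathcal{E}(\rho))&=\sum_{\alpha,j}U_\alpha^\dag\,(P_\alpha E_jP)\,\rho\,(P_\alpha E_jP)^\dag U_\alpha\\
&=\sum_{\alpha,j}|c_{\alpha j}|^2\,P\rho P=\Big(\sum_{\alpha,j}|c_{\alpha j}|^2\Big)\rho,
\end{align}
where I have used $U_\alpha^\dag U_\alpha=I$ to collapse $U_\alpha^\dag(P_\alpha E_jP)=c_{\alpha j}P$ and $P\rho P=\rho$. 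Hence $\mathcal{R}$ corrects $\mathcal{E}$, establishing sufficiency.

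For the forward direction, I assume an SPPR $\mathcal{R}$ with Paulis $\{U_\alpha\}$ corrects $\mathcal{E}$. By the same linearity argument used around Eq.~\eqref{assumption} in the proof of Theorem~\ref{SPR}, the map $\rho\mapsto\mathcal{R}(\mathcal{E}(P\rho P))$ equals $cP\rho P$ for a $\rho$-independent constant $c$, and this holds as an identity of channels for all $\rho$. Threading the projectors and fixed Paulis through $\mathcal{E}$, the left-hand channel has Kraus list $\{U_\alpha P_\alpha E_jP\}_{\alpha,j}$, while the right-hand channel has the single Kraus operator $\sqrt{c}\,P$. Since two Kraus representations of one channel are related by an isometry, and the right-hand list has length one, each left-hand operator must be a scalar multiple of $P$: $U_\alpha P_\alpha E_jP=\lambda_{\alpha j}P$. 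Left-multiplying by $U_\alpha^\dag$ yields $P_\alpha E_jP=\lambda_{\alpha j}U_\alpha^\dag P$, and relabeling $U_\alpha\mapsto U_\alpha^\dag\in\mathcal{P}$ together with $c_{\alpha j}=\lambda_{\alpha j}$ recovers \eqref{PFP}. The hard part will be making the Kraus-uniqueness step airtight: I must verify that $\{U_\alpha P_\alpha E_jP\}$ really is a Kraus list for the composite map on codespace inputs, and that a length-one target list forces proportionality to $P$ itself rather than to $P$ composed with some spurious codespace isometry. The only other delicate point is the Hermitian-conjugate and phase bookkeeping on the Pauli, which is exactly why the recovered operator appears as $U_\alpha^\dag$ and a harmless relabeling restores the stated form; everything else is routine and parallels the Knill-Laflamme and SPR arguments already given.
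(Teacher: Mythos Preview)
Your proposal is correct and follows essentially the same route as the paper's proof: the reverse direction is the same direct substitution of \eqref{PFP} into the SPPR recovery, and the forward direction is the same Kraus-uniqueness argument (two representations of the channel $\rho\mapsto cP\rho P$ must be isometrically related, forcing each $U_\alpha P_\alpha E_jP$ to be a scalar times $P$). The only cosmetic difference is that the paper writes the recovery Kraus operator as $U_\alpha^\dag$ from the outset, so no relabeling $U_\alpha\mapsto U_\alpha^\dag$ is needed at the end; your flagged concerns about the Kraus-uniqueness step and a ``spurious codespace isometry'' are not addressed any more carefully in the paper than in your sketch.
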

\begin{proof}
We prove the reverse direction first. Using the set of Kraus operators $\{E_j\}$ and for $\rho=P\rho P$ in the codespace,
\begin{align}
\mathcal{R}(\mathcal{E}(\rho))&=\sum_\alpha\sum_jU_\alpha^\dag P_\alpha E_j\rho E_j^\dag P_{\alpha} U_\alpha,\\
&=\sum_\alpha\sum_jU_\alpha^\dag (P_\alpha E_jP)\rho (PE_j^\dag P_{\alpha}) U_\alpha,\\
&=\sum_\alpha\sum_k|c_{\alpha j}|^2U_\alpha^\dag U_\alpha\rho U_\alpha^\dag U_\alpha\\
&\propto \rho.
\end{align}

Now for the forward direction. We assume
\begin{equation}
\mathcal{R}(\mathcal{E}(P\rho P))=\sum_\alpha\sum_j U_\alpha^\dag P_\alpha E_jP\rho PE_j^\dag P_\alpha U_\alpha=cP\rho P,
\end{equation}
for constant $c$ with $U_\alpha\in\mathcal{P}$ for all $\alpha$. Because this holds for all $\rho$, the channel with one operator element $\sqrt{c}P$ must be unitarily equivalent to the one with elements $\{U^\dag_\alpha P_\alpha E_j P\}_{\alpha,j}$. This implies the existence of constants $c_{\alpha j}$ such that
\begin{equation}\label{unitary_equiv}
U^\dag_\alpha P_\alpha E_jP=c_{\alpha j}P.
\end{equation}
Multiplying both sides by $U_\alpha$, we get Eq.~\eqref{PFP}. It is worth noting that using any other unitarily equivalent set of operator elements for $\mathcal{E}$ will just result in linear combinations of Eq.~\eqref{PFP} over the index $j$ (and not $\alpha$), and so does not change the conclusions. Thus, to confirm SPPR correctability of a channel we only have to verify Eq.~\eqref{PFP} for one set of operator elements.
\end{proof}
Theorem~\ref{SPPR} says that a Pauli recovery operation can be used whenever projecting the operator elements of the error channel to the orthogonal codespaces gives a Pauli error depending only on the projection result. 

We now consider Theorems~\ref{SPR} and \ref{SPPR} in the context of CCZ-form circuits.
\begin{defn}
A CCZ-form circuit is composed entirely of CCZ gates, and moreover, the qubits can be partitioned into three sets (say $A_i$ for $i=1,2,3$) such that any one CCZ gate acts on at most one qubit from each set. Without loss of generality, assume no CCZ gate is repeated (otherwise, they could be canceled). A $\text{C}^k\text{Z}$-form circuit is defined analogously for $i=1,2,\dots,k+1$.
\end{defn}
To characterize when SPR is appropriate for CCZ-form circuits, we need to discuss how errors propagate through them. In general, circuits define ``lightcones'', which contain all qubits that may be correlated. Lightcones also bound the region that errors may propagate. To be concrete, let $C=\{S_1,S_2,\dots,S_h\}$ be a circuit broken into timesteps $S_j$, each timestep a set of gates with disjoint support. For a set of qubits $Q$ at time $t$ the forward lightcone of $Q$ is denoted $\mathcal{L}_{t}(Q)$. We can define this recursively,
\begin{align}
l_{v}(Q)&=\{i:\exists g\in S_v,j\in Q\text{ s.t. }\{i,j\}\subseteq\text{supp}(g)\},\\
\mathcal{L}_{t}(Q)&=\mathcal{L}_{t+1}(l_{t+1}(Q)),
\end{align}
with $\mathcal{L}_h(Q)=Q$. The lightcone of a gate $g\in S_t$ is defined as the lightcone of that gate's output qubits, $\mathcal{L}(g)=\mathcal{L}_t\left(\text{supp}(g)\right)$.

However, CCZ-form circuits already restrict the propagation of errors more severely than na\"{i}ve application of lightcones would suggest. Indeed, it is not hard to verify the following claim.
\begin{claim}\label{claim}
In a CCZ-form circuit with depth $h$, the failure of a CCZ gate $g\in S_t$ places
\begin{enumerate}
\item at most one $X$ error per $A_i$
\item at most $Z$ errors on all qubits in the \emph{modified lightcone}
\begin{equation}
\tilde{\mathcal{L}}\left(g\right):=\bigcup_{v=t+1}^hl_v\left(\text{supp}(g)\right).
\end{equation}
\end{enumerate}
\end{claim}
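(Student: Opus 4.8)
The plan is to model the failure of $g\in S_t$ as the ideal gate followed by an arbitrary Pauli fault $E$ supported on the (at most three) qubits of $\text{supp}(g)$, and then to propagate $E$ forward through the remaining gates $W=S_h\cdots S_{t+1}$ to obtain the effective end-of-circuit error $\tilde E=WEW^\dagger$. I would first write $E=E_XE_Z$ (up to an irrelevant phase), splitting it into its $X$-type part $E_X$ and $Z$-type part $E_Z$, each a product of single-qubit Paulis on a subset of $\text{supp}(g)$ with at most one factor per $A_i$ (since $g$ touches at most one qubit of each set). Because every gate of the circuit is a CCZ, hence diagonal, the two parts can be tracked independently.

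First I would handle $E_Z$: since $Z$ commutes with every diagonal gate, $WE_ZW^\dagger=E_Z$, so the $Z$-type part never moves and never spawns new errors. Next I would establish the single-gate propagation rule for the $X$-type part, namely $\text{CCZ}_{123}\,X_1\,\text{CCZ}_{123}=X_1\,\text{CZ}_{23}$ (and its permutations), proved by a one-line computation of the accumulated phase $(-1)^{x_1x_2x_3+(x_1\oplus1)x_2x_3}=(-1)^{x_2x_3}$. This rule is the engine of the argument: an $X$ on a support qubit $q$ persists on $q$ as it crosses a CCZ acting on $\{q,q',q''\}$, but deposits a CZ byproduct on the other two qubits $q',q''$.

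From this I would extract both conclusions. For Part~1, note that the $X$ factors never migrate to new qubits and never multiply, so the only $X$ errors at the end sit on $\text{supp}(g)$, giving at most one $X$ per $A_i$. For Part~2, observe that every CZ byproduct generated at timestep $v$ lives on the partners of a support qubit in that gate, i.e.\ on qubits of $l_v(\text{supp}(g))$; unioning over $v=t+1,\dots,h$ places all byproducts inside $\tilde{\mathcal L}(g)$. Crucially, each CZ byproduct is itself diagonal, so it commutes with all subsequent CCZ gates and with the surviving $X$ factors and does not expand any further -- this is precisely why the one-step modified lightcone suffices in place of the full, recursively grown forward lightcone. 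Collecting terms, $\tilde E$ factorizes as the $X$ operators of Part~1 times a diagonal ($Z$-type) operator supported on $\text{supp}(g)\cup\tilde{\mathcal L}(g)$.

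The main obstacle -- really careful bookkeeping rather than a deep difficulty -- is that the diagonal byproducts are two-qubit CZ gates, not literal single-qubit $Z$ errors, so some care is needed to justify reading Part~2 as ``$Z$ errors on the modified lightcone.'' The point to emphasize is that these byproducts are diagonal (phase-type) and therefore (i) never create further $X$ errors and (ii) stay confined to $\tilde{\mathcal L}(g)$, which is all the subsequent SPR analysis actually requires. A secondary check is confirming that the distinct byproducts and the persisting $X$ factors mutually commute -- they act on disjoint qubits or commute in the $Z$-basis, since a CCZ acts on three distinct qubits -- so the claimed factorization of $\tilde E$ is unambiguous.
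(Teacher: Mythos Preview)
Your proposal is correct and follows the same approach as the paper, which dispatches the claim in a single sentence: ``This follows from the fact that a CCZ gate commutes with Pauli $Z$, but upon an incoming $X$ error on one node propagates CZ between its other two nodes.'' Your write-up is a fuller unpacking of exactly that propagation rule, and your observation that the diagonal byproducts are actually supported on $\text{supp}(g)\cup\tilde{\mathcal L}(g)$ (not just $\tilde{\mathcal L}(g)$) is a slight sharpening of the paper's statement; the only nit is that the CZ byproducts need not literally commute with the persisting $X$ factors, but this is irrelevant since the factorization into an $X$-part times a diagonal part is canonical regardless.
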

\noindent This follows from the fact that a CCZ gate commutes with Pauli $Z$, but upon an incoming $X$ error on one node propagates CZ between its other two nodes.

We now specialize to CCZ-form circuits on three 2D Bacon-Shor codeblocks. The codeblocks define the qubit partitioning $\{A_i\}$. To argue for fault-tolerance, we assume circuit depolarizing noise, i.e.~a gate fails (called a \emph{fault}) with probability $p$ by applying any Pauli error on its support following the ideal application of the gate. An error-channel $\mathcal{E}$ can be defined as acting on the output qubits with Kraus operators $\{E_j\}$ representing the products of depolarizing-noise generated Pauli errors propagated to the end of the circuit. A Kraus operator representing $t$ faults has order $O(p^{t/2})$. The error parameter $\epsilon=\sqrt{p}$ is used to define effective distance via Eq.~\eqref{approxPEPEP} above. In words, effective distance $d$ implies that up to $d-1$ faults are detectable, and any $\lfloor(d-1)/2\rfloor$ faults are correctable.
\begin{lem}\label{bs_ccz_lem}
Let $m,n$ be integers with $n\ge m$ and $m\ge3$. Depolarizing noise in a CCZ-form circuit on three $m\times n$ $Z$-gauge Bacon-Shor codeblocks is correctable with effective distance $m$ using SPR if and only if all gates have modified lightcones (see Claim~\ref{claim}) that intersect any codeblock in at most two rows.
\end{lem}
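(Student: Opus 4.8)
The plan is to establish both directions of the if-and-only-if by analyzing, via \thrm{SPR}, when projection onto the codespaces $P_\alpha$ preserves the orthogonality of the propagated Pauli errors $E_j$. By Claim~\ref{claim}, each faulty CCZ gate deposits at most one $X$ error per codeblock and a collection of $Z$ errors confined to its modified lightcone $\tilde{\mathcal{L}}(g)$. Since CCZ commutes with $Z$, the $Z$ errors never propagate and remain Paulis living on the support set $\tilde{\mathcal{L}}(g)$; the only propagation comes from an incoming $X$ error turning a downstream CCZ into a CZ between its other two nodes. So the real question is whether the \emph{syndrome} (the values $\alpha$) together with the codespace structure lets us distinguish any two low-weight fault patterns that act differently on the logical information. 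First I would set up the correctability criterion \eqref{approxPEPEP} concretely: for two Kraus operators $E_i, E_j$ each arising from at most $\lfloor(m-1)/2\rfloor$ faults, I need $PE_i^\dagger P_\alpha E_j P \propto P$, i.e.~$E_i^\dagger E_j$ restricted to a fixed syndrome sector is either a stabilizer (times phase) or anticommutes with $P_\alpha$ so that the expression vanishes.

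For the sufficiency (``if'') direction, I would assume every modified lightcone meets each codeblock in at most two rows and show effective distance $m$. The key geometric point is that in a $Z$-gauge Bacon-Shor code the logical $\bar Z$ is a single column (weight $m$) and logical $\bar X$ is a single row. The $Z$ errors from a single fault, being confined to at most two rows of a block, can form at most a portion of a single logical $\bar X$-type operator, and two rows is exactly the width of a $\bar Z_{i,j}$ gauge generator (which is in $S$ in the $Z$-gauge); so any $Z$-error cluster from one fault is a product of gauge-fixed stabilizers up to a low-weight remainder. The propagated CZ errors from $X$-faults are the delicate case: a single $X$ error creates CZ on a pair of nodes, and I must check that up to $\lfloor(m-1)/2\rfloor$ such faults cannot conspire to produce a logical action invisible to the syndrome $\alpha$. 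The plan is to count: each fault contributes at most weight-$2$ worth of $X$/CZ support structure, so fewer than $m/2$ faults cannot reach the weight-$m$ logical $\bar Z$, and the two-row constraint guarantees the $Z$-part stays within correctable distance. I would then invoke \thrm{SPR} to produce the recovery.

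For the necessity (``only if'') direction, I would argue the contrapositive: if some gate $g$ has a modified lightcone intersecting some codeblock in three or more rows, I construct an explicit pair of fault configurations, each of weight at most $\lfloor(m-1)/2\rfloor$, that share a syndrome $\alpha$ but differ by a logical operator — violating \eqref{approxPEPEP} at order below $\epsilon^m$. Concretely, a single fault on $g$ that spreads $Z$ errors across three rows, combined with the fact that three rows can span a full logical $\bar X$ row-direction, means the $X$-syndrome from the two-row gauge operators fails to detect the third-row discrepancy; I would exhibit two error operators whose product is a logical $\bar Z$ (or $\bar X$) yet lies in a single syndrome sector, defeating any SPR.

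\textbf{The main obstacle} I anticipate is the careful bookkeeping of how the propagated CZ errors (not just bare Paulis) interact with the $P_\alpha$ projectors. CZ is not a Pauli, but Claim~\ref{claim} tells us it only appears as a consequence of $X$ faults, and the detection happens through how these combine with subsequent gates — so the subtlety is that $E_i^\dagger E_j$ may contain CZ factors whose commutation with $P_\alpha$ must be tracked. I expect to handle this by organizing faults into their $X$-part (at most one per block per fault, detectable by the $Z$-gauge measurement in the stated error-correction) and their confined $Z$-part, and then arguing the two-row condition is precisely the threshold at which the $Z$-gauge syndrome from measuring the $\bar Z_{i,j}$ operators remains injective on correctable error classes. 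Verifying that this syndrome map is injective exactly when lightcones stay within two rows — and exhibiting the failure when they spill into a third — is where the real content of the proof lies.
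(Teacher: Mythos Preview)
Your high-level plan---verify Eq.~\eqref{approxPEPEP} via \thrm{SPR} in both directions, build an explicit counterexample for necessity, and count for sufficiency---matches the paper's. But two things go wrong in the details.

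The sufficiency counting is the real gap. You write ``each fault contributes at most weight-$2$ worth of $X$/CZ support structure, so fewer than $m/2$ faults cannot reach the weight-$m$ logical $\bar Z$.'' This is not enough: effective distance $m$ via \eqref{approxPEPEP} requires the condition for \emph{all} pairs $(E_i,E_j)$ with combined fault count $<m$, not just pairs with $\le\lfloor(m-1)/2\rfloor$ faults each (taking $E_i=I$ already forces you to rule out $(m-1)$-fault undetectable errors). At two $Z$-row errors per fault, $m-1$ faults give up to $2(m-1)>m$ row errors, so your count does not rule out $\bar Z$. The paper closes this with a sharper count: any $X$ error introduced to drive CZ propagation must be \emph{erased} by some other fault if the net error is to be a pure $Z$ logical, and that erasure costs a fault. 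Accounting for this, $t$ faults yield at most $t$ $Z$-row errors in any codeblock (with a separate check of a three-fault coincidence where two lightcones overlap). That erasure-cost argument is the idea your proposal is missing.

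Separately, some of your geometry is off. The $Z$-gauge stabilizer $\bar Z_{i,j}=Z_{i,j-1}Z_{i,j}$ lives in one row across two \emph{columns}, not two rows, so the sentence ``two rows is exactly the width of a $\bar Z_{i,j}$ gauge generator'' and the inference built on it do not work. In the necessity direction you invoke $\bar X$, but the obstruction is to producing $\bar Z$: the paper's construction places an $XXX$ fault on the offending gate $g$, uses a specific projector $P_\alpha$ to collapse the propagated CZ errors into Pauli $Z$'s on the three affected rows, and then fills the remaining $m-3$ rows with single-qubit $Z$ faults split between $E_1$ and $E_2$ so that their product is $\bar Z$ while the combined fault count is $m-1$. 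Your sketch gestures in this direction but does not land on the mechanism (collapsing CZ via $P_\alpha$) that makes the counterexample work.
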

\begin{proof}
For the forward direction, we assume by way of contradiction that some gate has a lightcone intersecting codeblock $A_i$ on at least three qubits. We appeal to Theorem~\ref{SPR} and find two $E_j$ and a codespace projector $P_\alpha$ that fail Eq.~\eqref{approxPEPEP}, implying that the circuit is not effective distance $m$, thereby obtaining a contradiction.

Label rows in $A_i$ by $r_1,r_2,\dots,r_m$ and without loss of generality assume the first three intersect the lightcone of gate $g$. Let $E_1$ represent the set of faults containing $g$ failing with $XXX$ on its support and single-qubit $Z$ errors on each of $r_4,r_5,\dots,r_{\lceil m/2\rceil+1}$. This is a total of $\lceil m/2\rceil-1$ faults. Let $E_2$ represent the set of faults again containing $XXX$ after $g$, but also single-qubit $Z$ errors on $r_{\lceil m/2\rceil+2},\dots,r_{m}$. This is a total of $\lfloor m/2\rfloor$ faults. Note $E_1E_2$ is the result of $m-1<m$ faults, and thus has order $O(p^{(m-1)/2})$.

Choose $P_\alpha$ corresponding to violation (i.e.~projection onto the $-1$-eigenspace) of the $Z$-type stabilizers that indicate $XXX$ on $\text{supp}(g)$ and violation of the $X$-type stabilizer spanning rows $r_{\lceil m/2\rceil+1}$ and $r_{\lceil m/2\rceil+2}$. All other stabilizers of any codeblock are not violated. 

With this setup, we see $PE_1^\dag P_\alpha\propto PE_1'P_\alpha=PE'_1$, where
\begin{equation}
E_1'\propto\prod_{q\in\text{supp}(g)}X_q\prod_{j=1}^{\lceil m/2\rceil+1}Z_{r_j}
\end{equation}
in which $X_q$ is an $X$ on qubit $q$ and $Z_{r_j}$ indicates $Z$ on any qubit in row $r_j$ (they are all equivalent with respect to $P_\alpha$ or $P$). By way of explanation, while propagating $XXX$ on the support of $g$ introduces CZ errors, we can collapse these CZ errors to Paulis using the projector $P_\alpha$. Our choice of $P_\alpha$ ensures $Z_{r_1}Z_{r_2}Z_{r_3}$ is the sole result of this collapse. 

Likewise, $P_\alpha E_2P\propto P_\alpha E'_2P=E'_2P$ with
\begin{equation}
E_2'\propto\prod_{q\in\text{supp}(g)}X_q\prod_{j=\lceil m/2\rceil+2}^{m}Z_{r_j}.
\end{equation}
Now notice $PE_1^\dag P_\alpha E_2P\propto PE'_1E'_2P$. However, this is not proportional to $P$ because $E'_1$ and $E'_2$ differ by a logical operator $\bar Z=\prod_{j=1}^mZ_{r_j}$. So Eq.~\eqref{approxPEPEP} fails to hold.

For the reverse direction, it is enough to satisfy Eq.~\eqref{approxPEPEP} to notice that a logical error cannot be written onto the data with any combination of fewer than $m$ faults. We do this using Claim~\ref{claim}. First, $\bar X$ cannot be created because the minimum weight of $\bar X$ is $n\ge m$ and each CCZ gate failure introduces at most one $X$ error per block. It remains to argue that $Z$ errors cannot cause $\bar Z$.

We do this by arguing that it always takes $t$ faults for every $t$ $Z$ errors placed in a specific codeblock, say $A_i$. Recall our goal is to cause $\bar Z_i$ exactly with less than $m$ faults. So every $X$ error we introduce (and we must introduce at least one, otherwise it clearly takes $m$ $Z$ errors to cause $\bar Z_i$) must also be removed by an \emph{additional} fault later in the circuit. While a single faulty CCZ gate, failing with some correlated $X$ errors on its support, may introduce $\le2$ $Z$ errors to $A_i$ (because of its restricted modified lightcone, see Claim~\ref{claim}) those $X$ errors must then be erased via a later fault. Two faults lead to at most two $Z$ errors. There is one other case to worry about though, when three faults can lead to three $Z$ errors. If $E_1$ and $E_2$ are the $X$ errors introduced by two different CCZ gates $g_1$ and $g_2$ failing, the ability to erase $E_1E_2$ by failure of a single later CCZ $g'$ implies that the union of the modified lightcones $\tilde{\mathcal{L}}(g_1)\cup\tilde{\mathcal{L}}(g_2)$ intersects $A_i$ on at most $3$ qubits, not $4$. This is so because $\tilde{\mathcal{L}}(g')\cap A_i$ (which is not empty) is a subset of both $\tilde{\mathcal{L}}(g_1)\cap A_i$ and $\tilde{\mathcal{L}}(g_2)\cap A_i$.
\end{proof}

We also want some guarantee on the size of circuits for logical operators on Bacon-Shor codes. This might be thought of a more detailed version of Appendix~\ref{BK-BS}.
\begin{lem}\label{circ_size}
A $\text{C}^k\text{Z}$-form circuit implementing $\overline{\text{C}^k\text{Z}}$ on $m\times n$ 2D Bacon-Shor codes in the $Z$-gauge must use $m^{k+1}$ gates. Moreover, selecting a row from each codeblock, there is exactly one gate joining qubits from all those rows.
\end{lem}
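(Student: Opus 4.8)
The plan is to exploit the $Z$-gauge structure to turn the statement into a counting problem about $\mathbb{F}_2$-polynomials. First I would record the relevant facts about the $Z$-gauge codespace of a single $m\times n$ block. Because every $\bar Z_{i,j}=Z_{i,j-1}Z_{i,j}$ is a stabilizer, we have $Z_{i,0}=Z_{i,1}=\dots$ on the codespace, so any two qubits in the same row act identically (by $Z$); hence on a computational ($Z$-)basis state lying in the common $+1$-eigenspace of these generators, the whole of row $i$ carries a single bit $x_i$. The remaining stabilizers $\tilde X_i=X_{i-1,*}X_{i,*}$ flip the bits of adjacent rows in tandem, so their orbits on such ``constant-row'' configurations are exactly the two parity classes; since $\bar Z=Z_{*,0}$ reads off $\bigoplus_i x_i$, the logical basis state $\ket{\bar b}$ of each block is the uniform superposition of all constant-row configurations of parity $b$.

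Next I would describe the action of a $\text{C}^k\text{Z}$-form circuit. Being a $(k{+}1)$-qubit gate subject to the form constraint, every physical $\text{C}^k\text{Z}$ gate touches exactly one qubit in each of the $k+1$ blocks and is diagonal in the $Z$-basis, so the whole circuit is diagonal and, on a joint constant-row configuration with row-bit vectors $\vec x^{(1)},\dots,\vec x^{(k+1)}$, multiplies by $(-1)^{F}$ where $F=\sum_{\vec r}a(\vec r)\prod_{c}x^{(c)}_{r_c}$, the sum running over row-tuples $\vec r=(r_1,\dots,r_{k+1})\in[m)^{k+1}$ and $a(\vec r)\in\{0,1\}$ being the parity of the number of gates joining row $r_c$ of block $c$ for every $c$. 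The pivotal step is then that, since each logical codeword is a \emph{uniform} superposition of such configurations and the circuit is diagonal, implementing $\overline{\text{C}^k\text{Z}}$ forces the phase to be constant across each superposition and equal to the logical phase $b_1\cdots b_{k+1}$ on \emph{every} configuration, where $b_c=\bigoplus_i x^{(c)}_i$.

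Finally I would match polynomials over $\mathbb{F}_2$. Expanding $b_1\cdots b_{k+1}=\prod_{c}\big(\sum_{i}x^{(c)}_i\big)=\sum_{\vec r}\prod_c x^{(c)}_{r_c}$ shows the target is the same kind of sum with all coefficients equal to $1$. The $m^{k+1}$ monomials $\prod_c x^{(c)}_{r_c}$ are distinct multilinear monomials in disjoint variables, hence linearly independent as functions on the cube, so the identity $F=b_1\cdots b_{k+1}$ pins down $a(\vec r)=1$ for every row-tuple $\vec r$. Equivalently, every one of the $m^{k+1}$ row-tuples is joined by an odd --- in particular nonzero --- number of gates, which already gives the lower bound of $m^{k+1}$ gates; equality holds precisely when each tuple is joined by exactly one gate, which is the ``moreover'' claim.

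I expect the main obstacle to be the pivotal claim in the second paragraph: carefully upgrading ``implements $\overline{\text{C}^k\text{Z}}$'' from a per-codeword statement to a per-configuration identity. This hinges on two things that must be stated cleanly --- that the circuit is $Z$-diagonal (so it cannot redistribute amplitude among configurations, only rephase them) and that the constant-row configurations making up a codeword are equally weighted with a common logical phase --- and on handling the freedom to compose with $Z$-type stabilizers, which act trivially both on the codespace and on each constant-row configuration and so leave the coefficients $a(\vec r)$ unchanged.
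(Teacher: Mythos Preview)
Your proposal is correct and takes a genuinely different route from the paper. The paper argues by induction on $k$: conjugating a bare logical $\bar X$ supported on a single row through the circuit (using that $\text{C}^kZ$ conjugates an incoming $X$ to $X\cdot\text{C}^{k-1}Z$ on the other legs) must yield $\bar X$ times a valid $\overline{\text{C}^{k-1}\text{Z}}$ implementation on the remaining blocks; if any row-tuple were unjoined, the resulting $\text{C}^{k-1}\text{Z}$-form circuit would fall short of the inductive lower bound. You instead exploit that $\text{C}^k\text{Z}$-form circuits are $Z$-diagonal and that $Z$-gauge codewords are uniform superpositions of constant-row configurations, reducing the claim to the $\mathbb{F}_2$-polynomial identity $F=\prod_c\sum_r x^{(c)}_r$, whose unique multilinear expansion forces $a(\vec r)=1$ for every tuple. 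Your route avoids induction and makes the counting transparent in one shot; the paper's conjugation argument stays closer to the operator-propagation viewpoint used in the neighboring lemmas and the Bravyi--K\"{o}nig-style discussion.

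Two small remarks. First, the ``pivotal step'' you flag is not an obstacle: diagonality plus the uniform-superposition form of $\ket{\bar b}$ really does force $(-1)^{F(\vec x)}$ to be constant across each parity class, and evaluating at $\vec x=\vec 0$ kills any global phase ambiguity. Second, your argument shows each row-tuple carries an \emph{odd} number of gates, giving the lower bound $m^{k+1}$; to obtain literally ``exactly one'' per tuple you need the same WLOG reduction the paper invokes (cancelling pairs of gates on the same row-tuple via controlled $\bar Z_{i,j}$), which you mention in passing but should state explicitly.
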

\begin{proof}
The proof proceeds inductively. Begin with $\overline{\text{CZ}}$. Assume each row is involved in $\le m$ CZ gates, otherwise they could be canceled via $Z$-gauge operators. If fewer than $m^2$ gates were used in a CZ-form circuit $C_1$, then there are rows $r_{A_1}$ and $r_{A_2}$ from codeblocks $A_1$ and $A_2$ that are not connected by a CZ gate. Thus, when $\bar X$ supported on $r_{A_1}$ is conjugated by $C_1$ we get $\bar X$ times at most $m-1$ Pauli $Z$s on codeblock $A_2$. This is not enough to construct $\bar Z$ on codeblock $A_2$, so the circuit $C_1$ cannot implement $\overline{\text{CZ}}$.

For the induction, assume $\overline{\text{C}^{k-1}\text{Z}}$ uses $m^k$ $\text{C}^{k-1}\text{Z}$ gates. If $\overline{\text{C}^k\text{Z}}$ could be implemented with fewer than $m^{k+1}$ gates by circuit $C_k$, a gate would be missing that couples some set of rows $\{r_{A_1},r_{A_2},\dots,r_{A_{k+1}}\}$, one from each codeblock. Then, when $\bar X$ supported on $r_{A_1}$ is conjugated by $C_k$, we get $\bar X$ times a $\text{C}^{k-1}\text{Z}$-form circuit with at most $m^k-1$ gates, which is not enough for $\overline{\text{C}^{k-1}\text{Z}}$ by the inductive assumption. So, $C_k$ cannot implement $\overline{\text{C}^k\text{Z}}$.
\end{proof}

With Lemmas~\ref{bs_ccz_lem} and \ref{circ_size} in hand, we can argue that CCZ-form circuits implementing $\overline{\text{CCZ}}$ on Bacon-Shor codes cannot be fault-tolerant using SPR if the code is too symmetric.
\begin{thm}
A CCZ-form circuit for $\overline{\text{CCZ}}$ on $m\times n$ 2D Bacon-Shor codes that is fault-tolerant to circuit depolarizing noise with SPR exists if $n\ge\lceil m/2\rceil^2$ and only if $n\ge\lceil m^2/4\rceil$.
\end{thm}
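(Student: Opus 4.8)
The statement has the shape of a sufficiency claim (the bound $n\ge\lceil m/2\rceil^2$ guarantees existence) together with a necessity claim (existence forces $n\ge\lceil m^2/4\rceil$), so I would prove the two halves separately. The unifying tool is Lemma~\ref{bs_ccz_lem}, which lets me replace the analytic notion ``fault-tolerant with SPR'' by the purely combinatorial condition that every gate's modified lightcone (Claim~\ref{claim}) meets each codeblock in at most two rows. Both directions then become statements about how the $m^3$ gates guaranteed by Lemma~\ref{circ_size} can be laid out in columns and timesteps.

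For sufficiency I would exhibit an explicit $2$-transversal circuit. Group the $m$ rows of each block into $\lceil m/2\rceil$ pairs and write $s(r)=\lfloor r/2\rfloor$ for the pair-index of a row $r$. By Lemma~\ref{circ_size} the circuit contains exactly one gate per row-triple $(r_A,r_B,r_C)$; I place that gate on the qubits $(r_A,\psi(s(r_B),s(r_C)))_A$, $(r_B,\phi(s(r_A),s(r_C)))_B$, $(r_C,\chi(s(r_A),s(r_B)))_C$, where $\psi,\phi,\chi$ are fixed bijections from pairs of pair-indices onto the column set $[\lceil m/2\rceil^2)$. Routing each qubit's column through the pair-indices of the \emph{other two} blocks forces each $B$-qubit to meet only the two $A$-rows of a single $A$-pair and the two $C$-rows of a single $C$-pair, and similarly for the other blocks. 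The decisive point is that for a fixed gate the $B$-qubit and the $C$-qubit both see the \emph{same} $A$-pair, namely $s(r_A)$; hence $\tilde{\mathcal{L}}(g)\cap A$ is contained in that one pair and meets $A$ in at most two rows, symmetrically for $B$ and $C$. Since this containment bounds \emph{all} partners, not just later ones, any assignment of the gates to timesteps (e.g.\ by greedily edge-colouring the interaction hypergraph) satisfies the hypothesis of Lemma~\ref{bs_ccz_lem}, which certifies effective distance $m$ with SPR. The construction uses exactly $\lceil m/2\rceil^2$ columns and embeds into any larger $n$, so existence holds whenever $n\ge\lceil m/2\rceil^2$.

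For necessity I would argue by counting. Given a fault-tolerant SPR circuit, Lemma~\ref{circ_size} supplies $m^3$ gates, one per row-triple, so a fixed row of block $B$ hosts $m^2$ gates partitioned among its $n$ column-qubits. Fix such a qubit $b$ and order its gates $g_1,\dots,g_d$ in time; they occur at distinct timesteps, since gates sharing $b$ cannot be simultaneous. The $A$-rows and $C$-rows of the successors of $g_1$ are exactly the contributions of $b$ to the modified lightcone of $g_1$, and Lemma~\ref{bs_ccz_lem} forces each set to have at most two values, so $\{g_2,\dots,g_d\}$ lies in a $2\times2$ grid of $(A\text{-row},C\text{-row})$ pairs and $d\le 5$. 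Promoting this to $d\le 4$ would give $m^3\le 4mn$ and hence $n\ge\lceil m^2/4\rceil$, which is the desired bound.

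The hard part is exactly this promotion from $5$ to $4$. The lightcone condition constrains only gates occurring \emph{after} a given one, so the earliest gate $g_1$ through $b$ escapes it and a naive count leaves room for a spurious fifth gate whose $A$-row lies outside the $2\times2$ grid. To kill it I would exploit the three-way coupling in Claim~\ref{claim}: the lightcone of $g_1$ in $A$ also absorbs the row of $g_1$'s own $A$-qubit whenever that qubit reappears in a later gate, together with the later $A$-partners of $g_1$'s $C$-qubit. Forcing all of these into a single pair of rows shows that a fifth gate at $b$ could survive only if $g_1$ were simultaneously the earliest gate through $b$ and terminal for its $A$- and $C$-partners — a rigid local configuration. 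Converting this local rigidity into the global bound $d\le 4$ (equivalently, ruling out $n<\lceil m^2/4\rceil$ through the forward direction of Lemma~\ref{bs_ccz_lem}) is the delicate bookkeeping step, and is where I expect the genuine work to lie; it also explains why the necessary and sufficient thresholds coincide for even $m$ but leave a gap for odd $m$.
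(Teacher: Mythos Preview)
Your sufficiency argument is the paper's $2$-transversal construction spelled out with explicit column bijections: the paper also partitions the rows of each block into $\lceil m/2\rceil$ pairs, bundles the $\le 8$ gates belonging to each pair-triple $(S_j^{(1)},S_k^{(2)},S_l^{(3)})$ into a subcircuit, and distributes $\lceil m/2\rceil$ disjoint subcircuits per column. Your observation that the bound on $\tilde{\mathcal{L}}(g)$ holds for \emph{all} partners (not just later ones), so that any time-ordering works, is a nice clarification but not a different idea.

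For necessity, you and the paper take the same route---count gates per qubit and invoke the forward direction of Lemma~\ref{bs_ccz_lem}---but you have been more scrupulous than the paper about what that lemma actually delivers. The paper's entire ``only if'' argument is two sentences: it asserts that any qubit touched by $>4$ CCZ gates is ``pigeonholed to have a modified lightcone including at least three rows of a codeblock,'' then counts $\lceil m^2/n\rceil\le 4$. It does not address the point you raise, namely that applying Lemma~\ref{bs_ccz_lem} to the earliest gate $g_1$ through such a qubit $b$ only constrains the \emph{later} gates $g_2,\dots,g_d$ to a $2\times 2$ grid of $(A\text{-row},C\text{-row})$ pairs, which yields $d\le 5$ rather than $d\le 4$. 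The paper treats the step from $5$ to $4$ as immediate; you correctly flag it as the crux, sketch a resolution via the contributions of $g_1$'s $A$- and $C$-qubits to $\tilde{\mathcal{L}}(g_1)$, and honestly leave it open. So your necessity argument is incomplete in exactly the place where the paper's is terse; the difference is that you have named the difficulty rather than eliding it.
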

\begin{proof}
The ``if" claim is constructive and uses the 2-transversality idea of \cite{Yoder2016}. For each codeblock $A_i$, partition the rows into $\lceil m/2\rceil$ sets $S_j^{(i)}$ of size at most two (i.e.~$\lfloor m/2\rfloor$ pairs and one unpaired row if $m$ is odd). There are $\lceil m/2\rceil^3$ ways to choose a tuple $(S_j^{(1)},S_k^{(2)},S_l^{(3)})$, and with each we will associate a CCZ-form circuit $C_{jkl}$ with $\le8$ CCZ gates that connects all rows in the sets in all ways (i.e.~is ``round-robin" as in \cite{Yoder2016}). For each column number $c\in[n)$, choose $\lceil m/2\rceil$ of $C_{jkl}$ that are disjoint. Since there are $n\ge\lceil m/2\rceil^2$ columns, we can do all the $C_{jkl}$ in parallel. Importantly, the construction of $C_{jkl}$ guarantees no gate has modified lightcone intersecting any codeblock in more than two rows. So, Lemma~\ref{bs_ccz_lem} guarantees fault-tolerance with SPR. In fact, it is not hard to show SPPR is sufficient as well. Once $t$ $X$ errors are located (from the $Z$-gauge syndrome data) the possible locations of at most $2t$ $Z$ errors per block are also located.

For the ``only if'' claim, we notice that any one qubit involved in $>4$ CCZ gates is pigeonholed to have a modified lightcone including at least three rows of a codeblock. Moreover, each row is involved in $m^2$ gates, which means there exists a qubit involved in at least $\lceil m^2/n\rceil$ gates. Thus, appealing to Lemma~\ref{bs_ccz_lem}, the bound $\lceil m^2/n\rceil\le4$ is necessary for SPR to exist. Note that for even $m$, this bound is equivalent to $n\ge\lceil m/2\rceil^2$.
\end{proof}

It remains to describe the fault-tolerance of our $3\times 3$ code using non-SPR recovery. Recall the circuit Eq.~\eqref{CCZ33}. In Fig.~\ref{cczpiece} we show the timesteps for column $j=0$. Since CCZ gates are isolated to single columns and other columns are related by qubit permutations within the column we need only consider fault-tolerance of the construction for column $j=0$.

Note first that Lemma~\ref{bs_ccz_lem} is violated. Any gate in timestep $t=0$ has a modified lightcone spanning all rows of codeblocks A and C. These are the only gates violating the Lemma however, so we need only check that all errors (in particular the $X$ errors) introduced at $t=0$ are correctable. The circuit is already wired in such a way that \emph{single} $X$ errors cannot propagate $Z$s to more than three rows of any block. In the case of correlated $X$ errors, the decoder (after measuring just the $Z$-gauge) knows a CCZ failed. If there is one $X$ error per block, it knows exactly which failed and can apply suitable $X$ and CZ correction. If two $X$ errors are present, the decoder knows one of two CCZ gates failed. However, we have built the circuit so that the pair of $X$ errors has a suitably restricted modified lightcone ($X$ errors on blocks A and C) or such that CZs can be unambiguously applied to correct part of the error propagation (for instance, it is sufficient to correct the CZ errors resulting from $t=2$ if the possible faulty CCZs are known to be in either $t=0$ or $t=1$). Remaining $Z$ errors are now located to at most two rows of each codeblock, and measurement of the $X$-stabilizers is sufficient to correct them. See also Appendix~\ref{thr_detail} for an explicit description of the decoder.

\begin{figure}
\includegraphics[width=0.8\columnwidth]{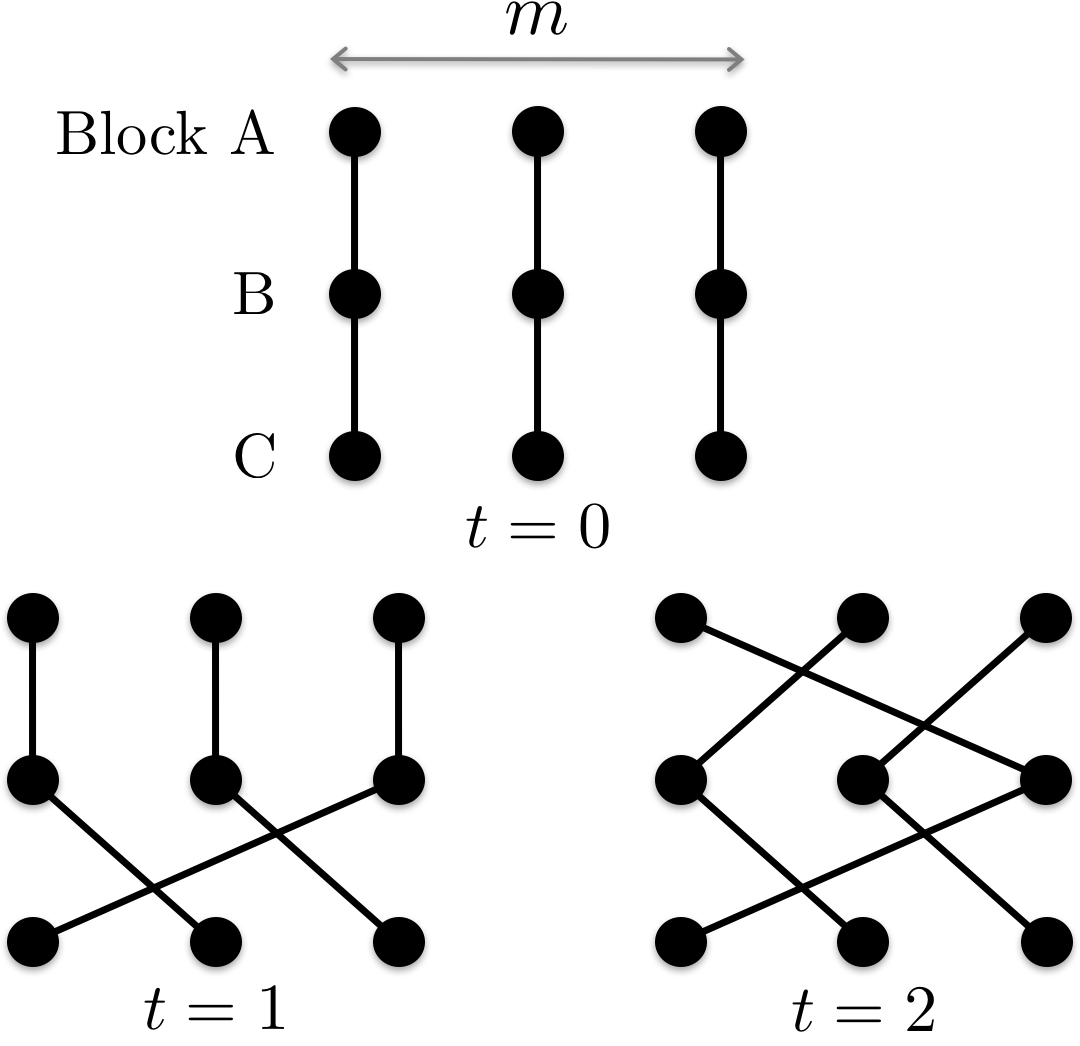}
\caption{\label{cczpiece}A diagram of three timesteps ($t=0,1,2$) of CCZ gates from the circuit in Eq.~\eqref{CCZ33}, shown for column $j=0$.}
\end{figure}

\section{Circuits, volume, and logical gate times in MUSICQ}\label{gate_detail}
In this section, we present circuits for the magic state injection protocols considered in Table~\ref{tab1}. We also discuss the circuit volume as a metric for comparing logical gates. Finally, we use projected physical gate times from the ion trap MUSICQ architecture \cite{Monroe2014} to estimate the spacetime volume (in units of $\mu s\times$qubits) and total time of logical CCZ gates.

The magic state used to implement CCZ is
\begin{align}
|\text{CCZ}\rangle&=\frac{1}{2\sqrt2}\sum_{i,j,k\in\{0,1\}}(-1)^{ijk}|ijk\rangle\\
&=\text{CCZ}\ket{+}^{\otimes3}.
\end{align}
This is the $+1$-eigenstate of the three stabilizers
\begin{equation}\label{CCZ_stabs}
S_1=X_1\text{CZ}_{23},\quad S_2=X_2\text{CZ}_{13},\quad S_3=X_3\text{CZ}_{12}.
\end{equation}
Notice $\ket{+}\ket{+}\ket{0}$ is already stabilized by $S_1$ and $S_2$, so we need only measure $S_3$ to create $\ket{\text{CCZ}}$.

This is what is done logically in Fig.~\ref{createCCZ}. This circuit prepares $\ket{\overline{\text{CCZ}}}$ fault-tolerantly for any distance three code with transversal CZ. We also need a way to prepare CAT states tolerant to one fault. This is done by Fig.~\ref{createCAT}. Using the $\ket{\overline{\text{CCZ}}}$ state to implement CCZ on three codeblocks is done using Fig.~\ref{injectCCZ}.


\begin{figure}[t]
\includegraphics[width=\columnwidth]{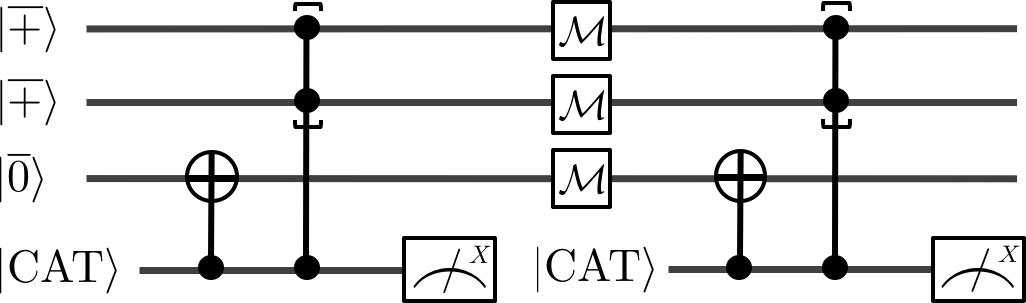}
\caption{\label{createCCZ} Preparing $\ket{\overline{\text{CCZ}}}$. The bracketed CZ indicates that logical transversal CZ is to be transversally controlled on the CAT state. CAT states are as large as a codeblock. The measurements are transversal, but the only meaningful result is the parity. Between measurements we perform full error-correction $\mathcal{M}$ on the codeblocks. If both parity measurements are the same, we accept, and if they are both 1, we need to apply $\overline{Z}$ to the last codeblock.}
\end{figure}

\begin{figure}
\includegraphics[width=0.8\columnwidth]{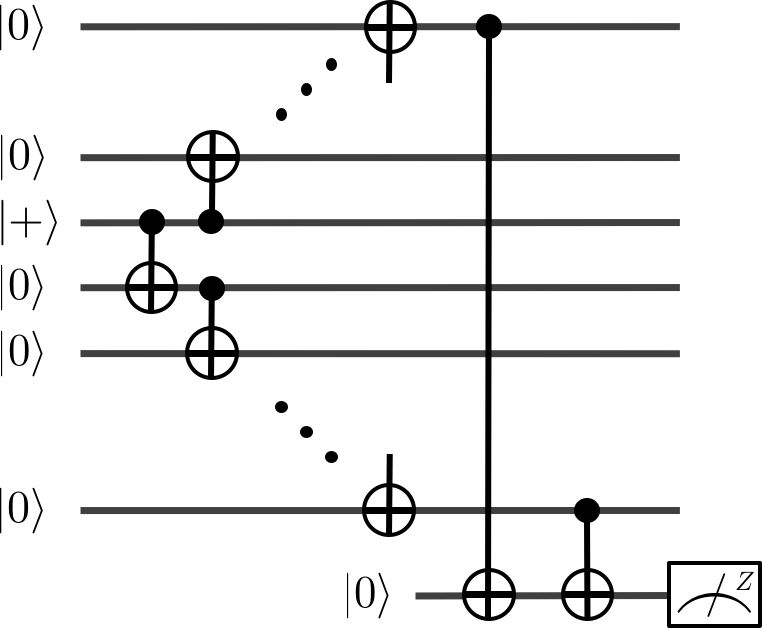}
\caption{\label{createCAT} Preparing a CAT state that will have at most a single-qubit error if we postselect on the measurement returning 0.}
\end{figure}

\begin{figure}[h]
\includegraphics[width=\columnwidth]{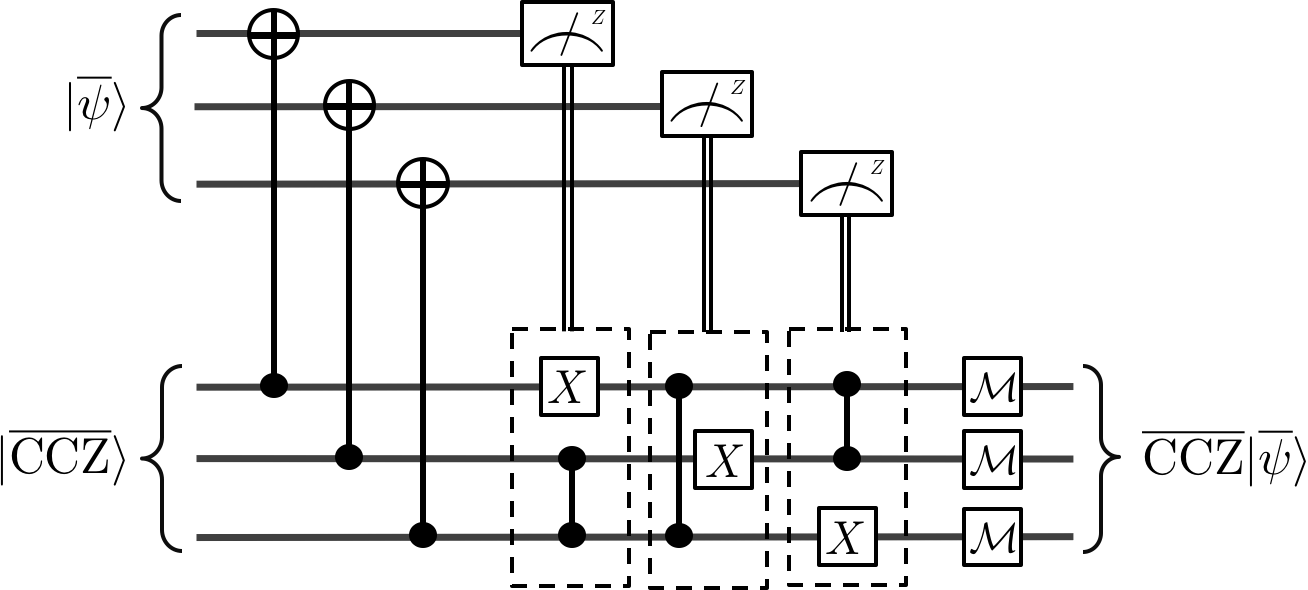}
\caption{\label{injectCCZ} A $\ket{\overline{\text{CCZ}}}$ state is used to implement $\overline{\text{CCZ}}$ fault-tolerantly. The logical measurements control logical implementations of the stabilizers $S_i$ from Eq.~\eqref{CCZ_stabs}. The final step $\mathcal{M}$ is complete stabilizer error-correction on all codeblocks.}
\end{figure}

However, relying on these magic-state constructions for implementing logical gates on low-distance codes is not necessarily a good idea. We now establish some metrics for comparing these circuits. Evaluating the magic-state constructions and our $3\times3$ Bacon-Shor CCZ, we find the latter decidedly advantaged; see Table~\ref{tab1}.

While qubit count and circuit depth are two simple circuit metrics, they are more easily manipulated. For example, it is well-known that a single CAT state can be repeatedly prepared, coupled to the data, and measured to extract all stabilizer syndromes. However, this takes a long time and thresholds suffer. A better metric should combine time and space. The circuit volume $\text{CV}$ does exactly this. If $n$ qubits are used and qubit $j$ is active for $s_j$ timesteps (i.e.~is involved in $s_j$ state initializations, gates, and measurements) then $\text{CV}=\sum_{j=1}^ns_j$.

We can fine-tune this metric if we know how long each circuit component takes on the physical hardware, e.g.~MUSICQ. Then, if a qubit $j$ is active for $t_j$ time, we define the spacetime volume $\text{ST}=\sum_{j=1}^nt_j$.

Both $\text{CV}$ and $\text{ST}$ are now relatively easily calculated for the magic-state circuits pictured here and our small $\overline{\text{CCZ}}$ on the $3\times3$ Bacon-Shor code in the MUSICQ architecture. MUSICQ single-qubit gates take time $1\mu s$ and two and three-qubit gates take $10\mu s$. State-preparation takes $1\mu s$ and measurement takes $30\mu s$ \cite{Monroe2014}. These are not state-of-the-art numbers but rather idealized times. We rounded the numbers in Table~\ref{tab1}, not because the counting is inexact but because small changes to the circuits and how they are parallelized will change the minor digits. For instance, we assume ancilla code blocks are available immediately when needed, rather than having to wait to reinitialize them (this is in contrast to the MUSICQ time calculation next). In any case, a rough counting suffices to distinguish our construction from the magic-state constructions.

In the interest of learning something about how fast fault-tolerant quantum computers might actually be, our last comparison regards the total time of logical gates, assuming fixed qubit counts. The number of qubits influences how parallelized circuits, such as Steane error-correction Fig.~\ref{GaugedSteaneEC}, can be. It is a somewhat arbitrary choice, but we attempt consistency by taking just one ancilla codeblock for each codeblock that needs error-correction (implying, for instance, the parts of Fig.~\ref{GaugedSteaneEC} are done in series with no parallelization). It also happens that the constructions then fit within a 100 qubit elementary logical unit (ELU) \cite{Monroe2014}. For our Bacon-Shor construction for example, this means $3\times 9$ data qubits and $3\times 9$ ancillas, a total of 54. For magic-states on the 9-qubit code we need $2\times3\times 9$ ancillas to hold the $\ket{\overline{\text{CCZ}}}$ state and do the error-correction in Fig.~\ref{createCCZ}, and we need $3\times9$ more for the data, a total of $81$ qubits. For magic-states applied to the 7-qubit code, we use Goto's method \cite{Goto2016} of preparing Steane states with one verification ancilla, so that $3\times 7+3\times(7+1)$ ancillas suffice and $3\times7$ data qubits, for a total of $66$ qubits.

For the magic-states we count both the time for creating $\ket{\overline{\text{CCZ}}}$ and the time for injecting it. Only twelve multi-qubit gates can be done in parallel in (our assumed implementation of) MUSICQ, which extends some timesteps (e.g.~when three codeblocks couple to Steane states at once).

\section{Pseudothreshold details}\label{thr_detail}
Here we clarify the details of our exREC \cite{Aliferis2006} pseudothreshold calculations for $3\times3$ Bacon-Shor, which are performed using exact counting. While the calculation of pseudothresholds for identity, Hadamard, and CNOT are essentially standard exREC calculations, the CCZ calculation must handle non-Clifford gates. We do this by tracking errors as Pauli sums, keeping phase coherence. Although not scalable, this method is sufficient for the relatively small $3\times3$ Bacon-Shor code. Also, we should note that our identity exREC calculations are performed identically to those in \cite{Yoder2017} (except with Steane error-correction), so are directly (and favorably) comparable with the identity exRECs considered there.

Our exRECs are formed from three components, a leading error-correction (LEC), the logical gate (Ga), and the trailing error-correction (TEC). The exREC is the composition of these in order TEC.Ga.LEC, such that LEC acts first. We also use the concept of an ideal decoder (Id), which measures all stabilizers and applies noiseless recovery, to define exREC failure. An exREC fails if the ideally decoded state following the TEC does not match the expected state given the ideally decoded state after the LEC. Thus, failure as defined requires at least one fault to be present in the Ga or TEC, and the LEC is included simply to model incoming errors.

\begin{figure}
\includegraphics[width=\columnwidth]{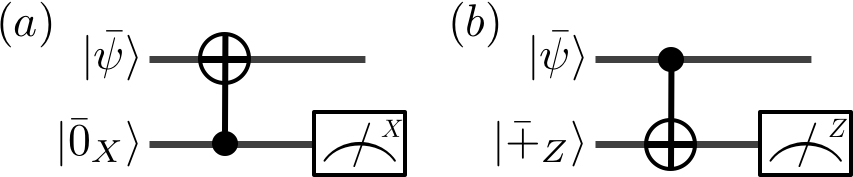}
\caption{\label{GaugedSteaneEC} (a) Measuring the $X$-type gauge operators with an $X$-gauge logical-zero state, transversal CNOT, and transversal measurement in the $X$-basis. (b) Likewise, measuring $Z$-type gauge operators. For a $Z$-gauge codeblock, type-1 error-correction is (b) followed by (a), while type-2 error-correction is (a) followed by (b).}
\end{figure}

In our calculations, Steane error-correction is used to extract syndromes via Fig.~\ref{GaugedSteaneEC}. We always perform type-1 correction, that is, ordering parts (a) and (b) of Fig.~\ref{GaugedSteaneEC} such that the Bacon-Shor codeblocks change gauge (from $X$ to $Z$ or vice-versa) after syndrome extraction. As described in the text, this gives us more information about the errors. Moreover, the LEC is always built to go from the $X$-gauges to the $Z$-gauge and the TEC from the $Z$ to the $X$. That way, all gates Ga take place on $Z$-gauge Bacon-Shor. While these choices are by symmetry irrelevant for the transversal Clifford gates, our CCZ construction works assuming the $Z$-gauge. To be consistent, the Id also always uses type-1 correction.

Once the syndrome is extracted, we need to decide on a recovery to perform, a process called decoding. This we perform by table lookup. All LEC, TEC, and Id (except for the CCZ TEC, which we describe separately) follow the same basic scheme. For each pattern of stabilizer measurements, determine the lowest weight Pauli error that is consistent. In particular, we make no attempt to optimize recovery over any structure of the circuit, an approach that may offer marginal, but not at all substantial improvement. Since we want to end in the +1-eigenspace of the new gauge (e.g.~$Z$), we should also apply gauge operators (e.g.~$\bar X_{i,j}$) based on the final set of gauge measurements (e.g.~of $\bar Z_{i,j}$) to ensure this.

The decoder for the CCZ TEC is more complicated. We first measure $Z$-gauge operators. Assuming at most one fault, we learn the locations of all $X$ errors, at most one per codeblock. The locations of $X$ errors define a set $C$ of CCZ gates, such that the failure of any one is capable of causing the $X$ errors. Find the temporally last CCZ from the set $C$ and assume it failed. Apply a recovery of $X$ and CZ gates to correct that failure. Also note that if one of the earlier CCZ gates failed instead, there could still be $Z$ and CZ errors remaining on the data. Record all rows of all codeblocks that could still be affected by these errors. Next, $X$-gauge operators are measured. The $X$-stabilizer information, along with the recorded possible locations of $Z$-errors, allows us to correct any remaining $Z$-errors (assuming one fault). If there are two or more faults, this procedure will necessarily fail on some cases. For us, if the set $C$ is empty, we default to the usual TEC used for logical transversal Clifford gates. While this decoding has a lengthy description in words, it can still be precomputed as a simple decoding table, and therefore the required classical computational overhead is just as little as for Clifford TEC.

Having described the exREC circuits and decoding, we now describe our simulation. Our error-model is standard circuit depolarizing noise -- a $q$-qubit gate $g$ (including identity $I$) is assigned a probability $p_g$ of failing, and when it does each of the $(4^q-1)$ non-identity, $q$-qubit Pauli errors has a probability $p_g/(4^q-1)$ of occurring. Initialization and measurement in Pauli bases are slightly different in that they are unaffected by one type of Pauli error (e.g. initialization of a $\ket{0}$ state is indifferent to a subsequent $Z$ error). Thus, initialization and measurement fail with probabilities $p_i$ and $p_m$, respectively, by suffering from the bad Pauli error. All components succeed or fail independently. The ability to use a separate failure rate for each component is a nice benefit of exact counting. In principle, we need not even use isotropic, or even constant-in-time, depolarizing noise, but do so for simplicity. We also note that, for simplicity, we always assume recovery operations (even non-Pauli) are perfect.

Our ultimate goal is to calculate the probability of failure of an exREC, $P_{\text{fail}}=\text{Pr}\left[\text{fail}\right]$. However, doing so exactly would mean considering all combinations of faults, propagating them through the circuits, and checking the exREC correctness condition for each. Instead, as is typical for distance three codes, we count just up to two faults. Therefore, we can exactly calculate the quantities
\begin{align}
P^{(2)}_{\text{fail}}&=\text{Pr}\left[\text{fail},\le2\text{ faults}\right]\\
P^{(2)}_{\text{succ}}&=\text{Pr}\left[\neg\text{fail},\le2\text{ faults}\right],
\end{align}
which are functions of the depolarizing rates $p_a$ for $a\in\{\text{CCZ},\text{CNOT},H,I,i,m\}$. As a check, we make sure $P^{(2)}_{\text{fail}}+P^{(2)}_{\text{succ}}=1-O(p_a^3)$. These two-fault counts provide upper and lower bounds on $P_{\text{fail}}$ like
\begin{equation}
P^{(2)}_{\text{fail}}\le P_{\text{fail}}\le 1-P^{(2)}_{\text{succ}}.
\end{equation}
The upper and lower bounds calculated by our counting can be seen plotted in Figs.~\ref{ErrorRateBS} and \ref{ErrorRate10BS}. Solving $P_{\text{fail}}=p_g$ defines the pseudothreshold for a gate $g$. Solving instead $1-P_{\text{succ}}^{(2)}=p_g$ and $P^{(2)}_{\text{fail}}=p_g$ gives lower and upper bounds on the pseudothreshold.

\begin{figure}
\includegraphics[width=\columnwidth]{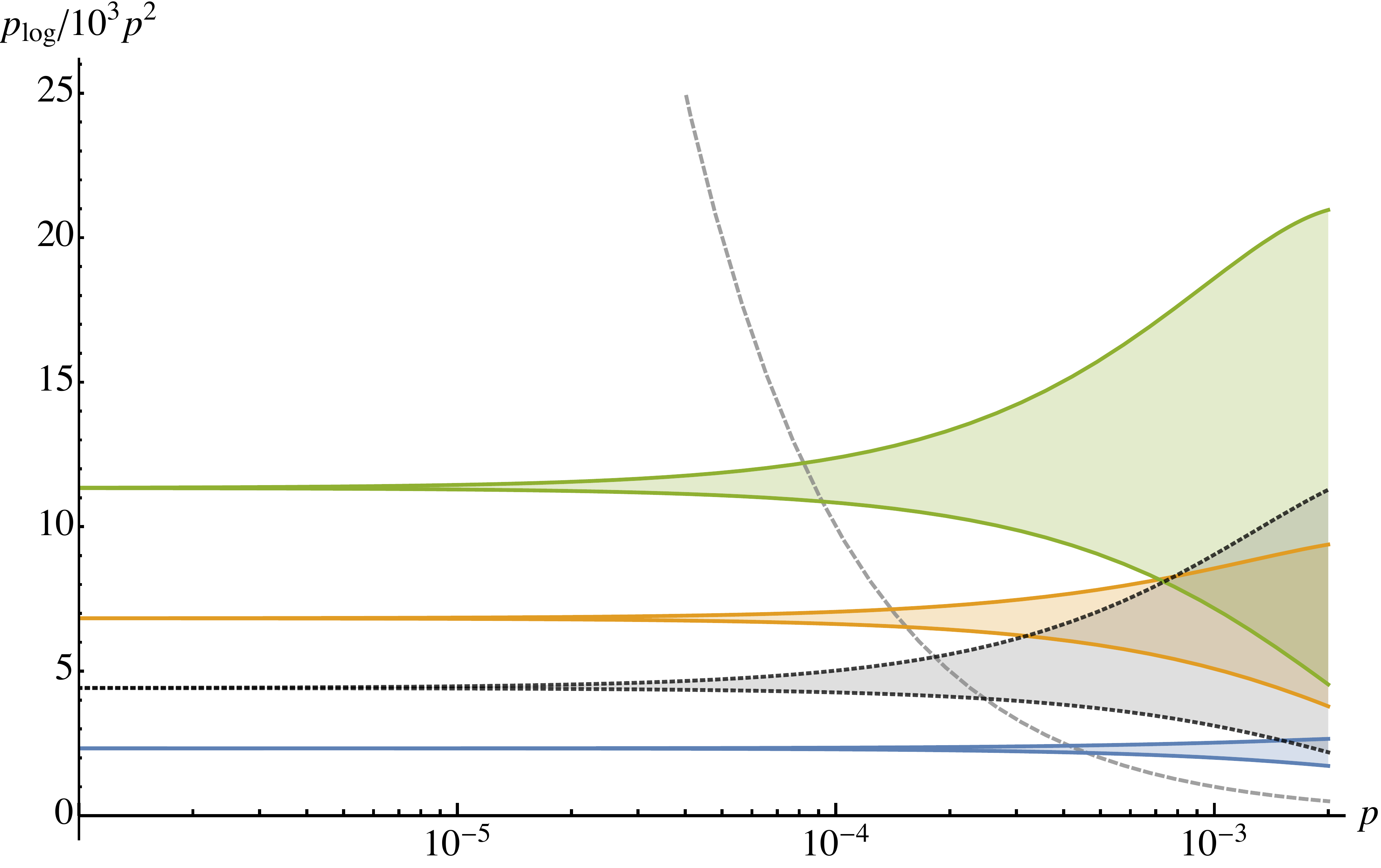}
\caption{\label{ErrorRateBS} Logical error rates $p_{\text{log}}$ of exRECs in the depolarizing circuit noise model in which all gates fail with probability $p$. We show from the bottom (1) the $3\times 3$ Bacon-Shor identity (or Hadamard) exREC (2) for comparison, the 9-qubit rotated surface code identity (as calculated in \cite{Yoder2017}) (3) the Bacon-Shor transversal CNOT (4) the Bacon-Shor CCZ. The dashed hyperbola $p_{\text{log}}=p$ is shown for estimating pseudothresholds.}
\end{figure}

\begin{figure}
\includegraphics[width=\columnwidth]{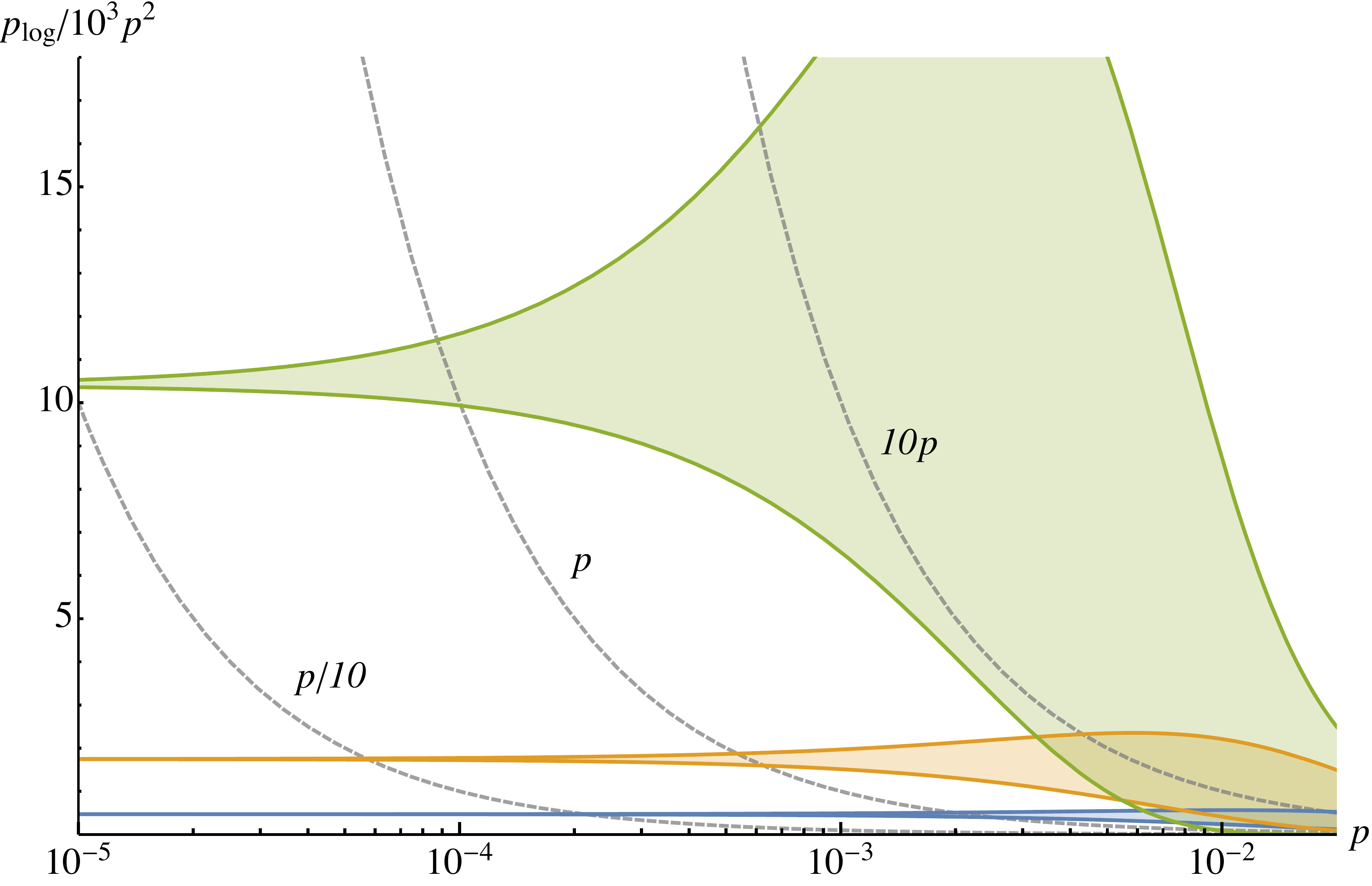}
\caption{\label{ErrorRate10BS} The same error rates for Bacon-Shor as Fig.~\ref{ErrorRateBS} but for circuit depolarizing noise that is harsher on larger components, $p_{\text{CCZ}}=10p$, $p_{\text{CNOT}}=p$, $p_{I}=p_{H}=p_i=p_m=p/10$. The hyperbolas at $p_{\text{log}}=p/10,p,10p$ are for estimating the pseudothreshold (of the CNOT failure probability) for the identity, CNOT, and CCZ exRECs, respectively. For CCZ, counting only up to two faults starts to become inaccurate.}
\end{figure}

Finally, we need to describe how errors, which begin as Pauli after the failed component, are tracked through our circuits. For exRECs consisting of Clifford gates (i.e.~all but the CCZ exREC) this tracking is a simple application of the Gottesman-Knill theorem \cite{Gottesman1998,Nielsen2000}. For the CCZ exREC, Pauli errors can become non-Pauli as they propagate through CCZ gates. Moreover, we cannot take a pessimistic approach and break these non-Pauli errors into several types of Pauli errors because our recovery works by explicitly correcting non-Pauli errors. Instead, we track all errors exactly, representing them as a sum of Pauli terms, storing the complex coefficients of each term in the sum. Stabilizer measurements break the Pauli sums in two, regrouping terms based on their commutation with the stabilizer measured. In our case, only measurement of the $X$-stabilizers actually does this, because terms in any given Pauli sum differ only in placement of $Z$s. To conclude this appendix, we argue this measurement mechanism is correct.

Our errors begin as unitaries (in fact, as Paulis) and are transformed by unitary conjugation as they progress. Thus, an error $E$ remains unitary. Since Paulis form an orthonormal basis under the Hilbert-Schmidt norm, we always have the ability to decompose $E$ into a \emph{Pauli sum},
\begin{equation}
E=\sum_\alpha a_\alpha \sigma_\alpha,
\end{equation}
where $a_\alpha\in\mathbb{C}$ and $\sigma_\alpha\in\mathcal{P}$ is in the Pauli group. Moreover, since $E^\dag E=I$ and thus $\text{Tr}[E^\dag E]=2^n$, we know $\sum_\alpha|a_\alpha|^2=1$. For two Paulis $\sigma,\gamma$, we define $[\sigma,\gamma]=0$ if they commute and $[\sigma,\gamma]=1$ if they anticommute.

Let us now assume that at the output of the circuit we are expecting the pure stabilizer state
\begin{equation}
\rho=\prod_{g\in G(S)}\frac12(I+g)=\frac{1}{2^n}\sum_{q\in S}q,
\end{equation}
where $S$ is an Abelian subgroup of $\mathcal{P}$ and $G(S)$ is a generating set for $S$. We actually have $E\rho E^\dag$ though. We can ask, what is the probability of finding one of the orthonormal states
\begin{equation}
\rho_{\vec m}=\prod_{g\in G(S)}\frac12(I+(-1)^{\vec m_g}g)
\end{equation}
at the output? This probability is
\begin{align}
\text{Pr}[\vec m]&=\text{Tr}\left[\rho_{\vec m}E\rho E^\dag\right]\\
&=\sum_\alpha a_\alpha\text{Tr}\left[\rho_{\vec m}\sigma_\alpha\rho E^\dag\right]\\
&=\sum_\alpha a_\alpha\text{Tr}\left[\sigma_\alpha\rho_{\vec m(\alpha)}\rho E^\dag\right],
\end{align}
where $\vec m(\alpha)_g=\vec m_g\oplus [\sigma_\alpha,g]$. Now, $\rho_{\vec m(\alpha)}\rho=0$ unless $\vec m(\alpha)=\vec 0$, in which case $\rho_{\vec m(\alpha)}=\rho$ and, because of purity, $\rho^2=\rho$. Thus,
\begin{align}
\text{Pr}[\vec m]&=\sum_{\alpha\text{ s.t.~}\vec m(\alpha)=\vec 0}a_\alpha\text{Tr}\left[\sigma_\alpha \rho E^\dag\right]\\
&=\sum_{\alpha\text{ s.t.~}\vec m(\alpha)=\vec 0}\sum_\beta a_\alpha a^*_\beta\text{Tr}\left[\sigma_\alpha\rho \sigma_\beta\right].
\end{align}
Now, if $\sigma_\beta \sigma_\alpha\in S$ (or, equivalently, $\vec m(\beta)=\vec m(\alpha)=\vec 0$), then the $(\alpha,\beta)$ trace term in the sum equals $1$. However, if $\sigma_\beta \sigma_\alpha\not\in S$, then the trace vanishes. We are left to conclude
\begin{equation}
\text{Pr}\left[\vec m\right]=|\sum_{\alpha\text{ s.t.~}\vec m(\alpha)=\vec0}a_\alpha|^2.
\end{equation}


\end{document}